\newtheorem*{remark}{Remark}
\theoremstyle{plain}
\newtheorem{theorem}{Theorem}[section]
\newtheorem{lemma}[theorem]{Lemma}
\newtheorem{corollary}[theorem]{Corollary}
\theoremstyle{definition}
\theoremstyle{remark}
\newtheorem{assumption}[theorem]{Assumption}
\newcommand{\mX}{\mathcal{X}}
\newcommand{\mN}{\mathcal{N}}
\newcommand{\bbR}{\mathbb{R}}
\newcommand{\mF}{\mathcal{F}}
\title{Mechanism Design for Fair Allocation}
\author{\IEEEauthorblockN{Abhinav Sinha and
		Achilleas Anastasopoulos}
	\IEEEauthorblockA{EECS Department,
		University of Michigan\\
		Email: \{absi,anastas\}@umich.edu}
}
\begin{document}
	%\begin{spacing}{1.1}
	%\setlength{\jot}{1ex}
	
	\newgeometry{top=1in,bmargin=0.75in,rmargin=0.75in,lmargin=0.75in}
	
	\maketitle 
	
	\begin{abstract}
		Mechanism design for a social utility being the sum of agents' utilities (SoU)
		is a well-studied problem.  There are, however, a number of problems of
		theoretical and practical interest where a designer may have a different
		objective than maximization of the SoU. One motivation for this is the desire
		for more equitable allocation of resources among agents. A second, more subtle,
		motivation is the fact that a fairer allocation indirectly implies less
		variation in taxes which can be desirable in a situation where (implicit)
		individual agent budgetary constraints make payment of large taxes unrealistic.
		In this paper we study a family of social utilities that provide fair
		allocation (with SoU being subsumed as an extreme case) and derive conditions
		under which
		Bayesian and Dominant strategy implementation is possible.
		Furthermore, it is shown how a simple modification of the above mechanism can guarantee full Bayesian implementation. 
		Through a numerical example it is shown that the proposed method can result in
		significant gains both in allocation fairness and tax reduction.
	\end{abstract}

	\section{Introduction} 
	\label{secintro}

	Mechanism design is a well-established framework for dealing with decentralized resource allocation problems in the presence of strategic agents.
	The corresponding literature is vast, especially in the domain of Dominant, Nash and Bayesian implementation, \cite{hurwicz2006,maskin2002implementation,jackson,krishna}. 
	In the area of Engineering--and in particular in the area of Networks--
	a majority of the works start with the assumption that the social objective is the sum of individual utilities (SoU) of all the system's agents.
	
	%==========================================
	
	There are strong mathematical reasons for preferring SoU as the resource allocation objective. The main one being that for SoU, in conjunction with quasi-linear utilities, agents' individual goals can be aligned directly with the overall social objective, so that when agents maximize their net utility they are simultaneously maximizing the overall social objective. The VCG mechanism~\cite{groves1973incentives} is the most prominent example of this. With social objective $ \sum_{i=1}^N v(\hat{x}(\theta);\theta_i) $ and individual utility of $ v(\hat{x}(\phi);\theta_i) - t_i(\phi) $, VCG taxes $ t_i(\phi) = - \sum_{j \ne i} v(\hat{x}(\phi);\phi_j) + f_i(\phi_{-i}) $ precisely ask the user to perform social objective maximization whilst maximizing self utility.
	
	%==========================================
	
	There are, however, a number of problems of theoretical and practical interest where a designer may have a different objective than maximization of the SoU.
	One obvious reason for such a preference is the desire of the social planner to introduce fairness in the allocation process. Consider for example, a network where optimizing SoU results in one (or a few) agents receiving almost all the available resources and everyone else receiving an appreciably lower portion.
	In instances of this kind, appealing to fairness or equality, a system designer may genuinely want allocations which are more equitable, even if this may come at the cost of reduced revenue.
	
	%==========================================
	
	A second--more subtle--reason for wanting a different social objective is the fact that the standard mechanism design framework does not provide any formal way of limiting the range of the taxes/subsidies required at equilibrium. This implies that when strong budget balance is imposed on the mechanism, the magnitude of the monetary transfers (taxes/subsidies) can vary greatly among agents (with those who benefit more from the allocation having to contribute more as well). This can be a significant practical problem since it does not take into account the budget constraints of individual agents.
	The work of~\cite{wei2014competitive} is attacking this problem by
	considering the dual version of the resource allocation problem and putting additional structure on the dual variables to ensure less variation between them. This would typically ensure less fluctuation in prices (and thus in the monetary transfers) since it is well-known that the dual variables for resource allocation optimization problems act as prices in the corresponding markets.
	A modified social objective provides an alternative way for dealing with the issue of large tax variation. By making the social objective a more concave function of the utilities (compared to the SoU) a smaller variation of the taxes is expected.
	
	%===================================================
	
	As soon as one moves away from the SoU objective, due to the social and individual objectives not aligning with each other, basic design techniques like VCG mechanism are not useful.
	This is one of the main reasons why there are significantly fewer results on fairness in the mechanism design literature.
	%%%%===================================================
	In~\cite{cole13}, authors use the concept of ``proportional fairness'' (similar to the sum of $ \log $ of utilities) to reduce disparity. The focus is on a tax-less mechanism where the contract proposes to throw away existing resources (``resource-burning'') in order to tax untruthful agents. This mechanism achieves at least $ \frac{1}{e} $ fraction of proportionally fair allocation. In~\cite{vohra2012}, optimal auctions in the Bayesian set up are derived, such that instead of efficiency maximization or revenue maximization, a linearly combined metric is maximized which favors exchange of goods at low prices (thereby ensuring fairer trade).
	``Envy-freeness'' is another well-known criterion for equitable allocations (see~\cite{varian1974equity}). This notion was originally proposed for exchange economies where an allocation is called envy-free if no agent is strictly better off by taking  \restoregeometry someone else's allocation instead of their own\footnote{Note that this exchange refers to both allocation of good and taxes.}. Such a notion was argued in terms of the stability it provides, since each agent may be content with what they have comparing to the possible option of acquiring someone else's allocation. This, however, is an ex-post notion and in general imposes quite stringent constraints on the design. For a large enough environment it may indeed be impossible to achieve envy-freeness in optimal allocation.

	It is interesting to note that the problem of mechanism design for risk-averse agents (see for e.g.,~\cite{maskin1984optimal,peres_risk}) is in some respects the opposite problem of the one addressed here.
	In that case, the social planner's objective is relatively more ``aggressive'' compared to the more risk averse
	individual objectives.
	
	%==============================================
	
	In this paper, we ask if and how we can design mechanisms that implement social objectives that are especially designed for fairness and go beyond the standard paradigm of SoU.
	We seek a methodology that is flexible enough to create space for the designer when envy-freeness may not be feasible.
	We concentrate on the form of the social objective given by the additive function $\sum_{i=1}^N g_{\epsilon}(v(x;\theta_i))$ where $v(x;\theta_i)$ is the utility of the $ i $-th user with allocation $ x $ and type $\theta_i$. Here we take $ g_\epsilon(z) = z - \epsilon f(z) $ as a family of concave functions parameterized by $ \epsilon > 0 $, with $f(\cdot)$ an arbitrary convex function.
	In this setup, the SoU is a special case with $\epsilon=0$, while as the parameter $\epsilon$
	increases, more fairness is built into the allocation.
	The form $ g_\epsilon(z) = z - \epsilon f(z) $ is considered without significant loss of generality, since it closely emulates Taylor's series (w.r.t. $ \epsilon $) of many interesting families of concave functions. Consider for example the family $ h_\epsilon(z) = z^{1-\epsilon} $, the Taylor's series for this family is $ {h}_\epsilon(z) = z - \epsilon (z \log(z)) + \mathit{o}(\epsilon) $, where $ z\log(z) $ is indeed convex. Note that unlike agents' utilities, the choice of the specific function $ g_\epsilon $ is in the designer's hand, as long as it serves the design objective of fairer allocation. 
	Furthermore, in Section~\ref{secdisc} we discuss how the results are still valid even when $ f(z) $ depends on $ \epsilon $ (under certain conditions).

	Within this framework we ask for which values of the parameter $\epsilon$ and under what conditions for the convex function $ f(\cdot)$ is Dominant strategy implementation possible and when is Bayesian Nash Equilbrium (BNE) implementation possible.
	We show that indeed mechanism design is possible provided $\epsilon $ is not too large, by providing an upper bound on the range of $\epsilon$.
	This is done by formulating the incentive compatibility constraints as a set of linear inequalities on the design variables and checking whether this system (together with strong budget balance and/or individual rationality) is feasible. Our proving techniques follow closely the work of~\cite{dAGV90, dAGV03}. Not surprisingly, the results in the Bayesian set-up are derived under certain assumptions on the prior beliefs, $p_i(\theta_{-i}|\theta_i)$, of agents, which are trivially satisfied for the case where $p_i(\theta_{-i}|\theta_i)=p_i(\theta_{-i})$.

	For the case of Bayesian mechanism design, we also propose a modification to our mechanism which ensures not only that truth-telling is a BNE but also that it is the only BNE. This modification doesn't make significant changes to the mechanism and only requires exchange of one additional message.

	We finally demonstrate (through a numerical example) that the range of $ \epsilon $ is sufficient to provide quite significant gains in fairness - as measured by the decrease in Gini Coefficient of the utilities. In addition, and this relates to the second reason we mentioned above regarding our motivation for this work, the results show a significant decrease in the variance of required taxes to achieve incentive compatibility.

	%===============================================
	
	The remaining of this paper is organized as follows: Section~\ref{seccp} defines the Centralized problem which has been modified for fairness. The next two sections prove the existence of mechanisms that implement the aforementioned social objectives for type sets of size two in Dominant strategy (Section~\ref{secds}) and for general type sets in BNE (Section~\ref{secbne}). Section~\ref{secfbi} presents the modification for full Bayesian implementation.
	The numerical example is presented in Section~\ref{secexam}. Finally, Section~\ref{secdisc} discusses future work and immediate extensions of the results in here.

	\section{Centralized Problem} \label{seccp}
	
	For a system of agents $ \mN = \{ 1,\ldots,N \} $, efficient allocation is calculated via the following optimization problem:
	\begin{gather} \label{eqcp}
	\hat{x}_\epsilon(\theta) = \arg\max_{x \in \mX} \sum_{i \in \mN} g_\epsilon(v(x;\theta_i)),
	\end{gather}
	where
	$ \mX \subset \bbR_+^N $ is the constraint set,
	$ \theta = (\theta_i)_{i \in \mN} \in \Theta \triangleq \times_{i \in \mN} \Theta_i $ is the \emph{type profile} of agents and $ \Theta_i $ is the discrete type set for agent $i$ with $ \vert \Theta_i \vert = L_i $. The utility function $ v(x;\theta_i) $ measures agent's $i$ satisfaction at allocation $ x $ with private type being $ \theta_i $. A concave transformation $ g_\epsilon(\cdot) $ is applied for making the allocation \emph{fairer} compared to the SoU setup.
	It is further assumed that
	the functions $ g_\epsilon:\bbR \rightarrow \bbR $ and $ v(\cdot;\theta_i):\bbR^N_+ \rightarrow \bbR $ are such that the optimization has a unique solution (e.g., if $ v(\cdot;\theta_i) $ is concave and $ g_\epsilon $ is concave and increasing and $ \mX $ is a convex set).
	Specifically the form $ g_\epsilon(z) = z - \epsilon f(z) $ for $ \epsilon \ge 0 $ is considered here, where $ f(\cdot) $ is assumed to be a bounded convex function. Note that at $ \epsilon = 0 $ optimization~\eqref{eqcp} becomes the SoU problem, while as $ \epsilon $ increases from $ 0 $ the function $ g_\epsilon $ has a stronger concave component.

	Define, $ \forall $ $ i \in \mN $, $ \mF_i \coloneqq \{ (\psi_i,\xi_i) \in \Theta_i^2 \mid \psi_i \ne \xi_i \} $. With this we define the difference, $ \forall~i \in \mN $, $ (\theta_i,\phi_i) \in \mF_i $, $ \theta_{-i} \in \Theta_{-i} $, $ \epsilon > 0 $,
	\begin{multline} \label{EQkdef}
	K_i(\theta_i,\phi_i,\theta_{-i},\epsilon) \coloneqq \sum_{j \in \mN} g_\epsilon(v(\hat{x}(\theta_i,\theta_{-i});\theta_j))
	\\
	- \sum_{j \in \mN} g_\epsilon(v(\hat{x}(\phi_i,\theta_{-i});\theta_j)).
	\end{multline}
	Optimality conditions from~\eqref{eqcp} give that the above difference is always non-negative. However, due to the finite type spaces, the difference above is expected to be strictly positive. We make appropriate assumptions in this regard later.

	\section{Dominant Strategy Implementation} \label{secds}

	The Mechanism Design problem in this section is to find a message space $ \mathcal{M} = \times_{i \in \mN} \mathcal{M}_i $ and allocation, tax functions $ (\tilde{x},t):\mathcal{M} \rightarrow \mX \times \bbR^N $ such that the induced game for agents in $ \mN $ with action space $ \mathcal{M} $ and quasi-linear utilities
	\begin{equation}
	\tilde{u}_i(m;\theta_i) = v(\tilde{x}(m);\theta_i) - t_i(m) \quad \forall~m \in \mathcal{M},~i \in \mN
	\end{equation}
	has a dominant strategy equilibrium\footnote{For any $ \theta $, message $ m^\star \in \mathcal{M} $ is a dominant strategy equilibrium if it satisfies
		$ \tilde{u}_i(m_i^\star,m_{-i};\theta_i) \ge \tilde{u}_i(m_i,m_{-i};\theta_i) \quad \forall~m_i \in \mathcal{M}_i,~\forall~m_{-i} \in \mathcal{M}_{-i},~\forall~i \in \mN $. 
	}
	$ m^\star $ for which $ \tilde{x}(m^\star) = \hat{x}_\epsilon(\theta) $, where $ \theta = (\theta_i)_{i \in \mN} $ is the true type profile. This is known as Dominant Strategy Incentive Compatibility (DSIC).
	
	In general, Dominant strategy implementation is very restrictive (note that the well studied VCG mechanisms are no longer applicable since this is not the maximization of SoU). Following~\cite{dAGV79}, 
	the special case of $ L_i = 2 ~ \forall ~ i \in \mN $ is considered in this section. In particular, $ \Theta_i = \{ \theta_i^H,\theta_i^L \} ~\forall~ i \in \mN $.
	
	The proposed mechanism is a direct mechanism, thus $ \mathcal{M}_i = \Theta_i ~\forall~ i \in \mN $. Agents report their types (possibly untruthfully) $ \phi = (\phi_i)_{i \in \mN} $ and the allocation they receive on the basis of this is the optimal allocation $ \hat{x}_\epsilon(\phi) $ for the quoted type profile $\phi$, where $ \hat{x}_\epsilon(\cdot) $ is defined in~\eqref{eqcp}.

	Assuming that for not participating in the mechanism, an agent receives $ 0 $ utility value (including $ 0 $ tax), the voluntary participation condition for Dominant strategy implementation is $ \forall~(\theta_i,\theta_{-i}) \in \Theta,~i \in \mN$,
	\begin{equation}
	v(\hat{x}_\epsilon(\theta_i,\theta_{-i});\theta_i) - t_i(\theta_i,\theta_{-i}) \ge 0.
	\end{equation}
	This is the \emph{ex-post} version of the individual rationality (IR).
	
	The first contribution of this paper is summarized in the following Theorem.
	
	\begin{theorem} \label{thmDSIC0}
		For any $ \epsilon \ge 0 $, there exist taxes $ \big(t_i(\phi)\big)_{\phi \in \Theta,i \in \mN} $ that satisfy the corresponding DSIC (which implies implementation in Dominant strategies) and IR conditions if and only if, $ \forall~\theta_{-i} \in \Theta_{-i},~i \in \mN $,
		\begin{multline} \label{EQthm1cond}
		v(\hat{x}_\epsilon(\theta_i^H,\theta_{-i});\theta_i^H) 
		{} + {} v(\hat{x}_\epsilon(\theta_i^L,\theta_{-i});\theta_i^L) 
		\\
		{} - {} v(\hat{x}_\epsilon(\theta_i^H,\theta_{-i});\theta_i^L) 
		{} - {} v(\hat{x}_\epsilon(\theta_i^L,\theta_{-i});\theta_i^H) \ge 0.
		\end{multline}
	\end{theorem}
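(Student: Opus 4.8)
The plan is to translate the DSIC and IR requirements into a small system of linear inequalities in the tax variables and to show that this system is feasible precisely when~\eqref{EQthm1cond} holds. The essential structural feature to exploit is that a dominant-strategy requirement decouples across the reports $\theta_{-i}$ of the other agents: truthful reporting by $i$ must be optimal against every fixed profile $\theta_{-i} \in \Theta_{-i}$ separately. I would therefore fix an agent $i$ and a profile $\theta_{-i}$ and analyze only the two tax values $t^H := t_i(\theta_i^H,\theta_{-i})$ and $t^L := t_i(\theta_i^L,\theta_{-i})$. For brevity write $V^{rs} := v(\hat{x}_\epsilon(\theta_i^r,\theta_{-i});\theta_i^s)$ for $r,s \in \{H,L\}$, so that the first superscript records the \emph{reported} type and the second the \emph{true} type.

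With this notation the two relevant DSIC constraints --- a true $\theta_i^H$ agent not gaining by reporting $\theta_i^L$, and a true $\theta_i^L$ agent not gaining by reporting $\theta_i^H$ --- read
\begin{align*}
V^{HH} - t^H &\ge V^{LH} - t^L, \\
V^{LL} - t^L &\ge V^{HL} - t^H.
\end{align*}
Both constraints involve the taxes only through their difference, and rearranging isolates $t^H - t^L$ inside a closed interval,
\begin{equation*}
V^{HL} - V^{LL} \;\le\; t^H - t^L \;\le\; V^{HH} - V^{LH}.
\end{equation*}
Thus, for this fixed $(i,\theta_{-i})$, the DSIC constraints are satisfiable by some choice of taxes if and only if this interval is nonempty, i.e.\ its lower endpoint does not exceed its upper endpoint.

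Collecting terms, the nonemptiness condition $V^{HL} - V^{LL} \le V^{HH} - V^{LH}$ is exactly $V^{HH} + V^{LL} - V^{HL} - V^{LH} \ge 0$, which is inequality~\eqref{EQthm1cond}. This already establishes both directions of the equivalence for DSIC alone: any feasible taxes exhibit a difference lying in the interval, forcing~\eqref{EQthm1cond}; conversely, under~\eqref{EQthm1cond} the interval is nonempty and any difference in it is realizable. To incorporate ex-post IR I would note that the IR conditions are the upper bounds $t^H \le V^{HH}$ and $t^L \le V^{LL}$, which constrain each tax individually from above but impose no lower bound. Since DSIC pins down only the difference $t^H - t^L$, starting from any DSIC-feasible pair I can decrease both taxes by a common positive constant --- leaving the difference, and hence DSIC, intact --- until both IR inequalities hold. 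Consequently IR can always be appended to a DSIC-feasible solution, and the combined feasibility condition remains~\eqref{EQthm1cond}.

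Finally, since the analysis is performed independently for each $i$ and each $\theta_{-i}$, and the tax variables associated with distinct profiles $\theta_{-i}$ are themselves distinct, the per-profile feasibility results assemble without conflict into feasibility of the full tax schedule. The step I expect to demand the most care is precisely this assembly, namely verifying that the dominant-strategy requirement genuinely reduces to independent subproblems that share no tax variables, so that simultaneous feasibility of all subproblems is equivalent to existence of a single global tax function satisfying DSIC and IR.
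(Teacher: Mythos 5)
Your proof is correct, and it reaches the paper's conclusion by a genuinely more elementary route. Both arguments start from the same decomposition: in a direct mechanism, DSIC requires truth-telling by $i$ to be a best response against every fixed report profile $\theta_{-i}$, and the constraints for distinct pairs $(i,\theta_{-i})$ involve disjoint tax variables $\big(t_i(\theta_i^H,\theta_{-i}),t_i(\theta_i^L,\theta_{-i})\big)$, so the problem splits into independent two-variable subproblems --- the assembly step you flag as delicate is precisely what the paper also uses, and it is unproblematic for exactly the reason you give. Where you diverge is in certifying feasibility of each subproblem. The paper stacks the two DSIC and two IR inequalities into a $4\times 2$ linear system and invokes the Farkas Lemma: solving $A^\top\lambda=0$ over $\lambda\in\bbR_+^4$ forces $\lambda=(\xi,\xi,0,0)$, and the resulting alternative condition is exactly \eqref{EQthm1cond}. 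You instead note that both DSIC inequalities constrain only the difference $t^H-t^L$, so DSIC feasibility is nonemptiness of the interval
\begin{equation*}
V^{HL}-V^{LL} \;\le\; t^H-t^L \;\le\; V^{HH}-V^{LH},
\end{equation*}
which is \eqref{EQthm1cond}, and you then absorb IR by shifting both taxes down by a common constant --- legitimate because IR imposes only upper bounds and the shift leaves the difference, hence DSIC, intact. Your route avoids duality machinery entirely and makes the equivalence transparent; the paper's Farkas setup is heavier than needed here, but it is the same template the paper reuses in the Bayesian case (Theorem~\ref{thmBSIC}), where the incentive constraints couple many tax variables through expectations, no interval reduction or uniform-shift trick is available, and the duality argument becomes essential rather than optional.
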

	\begin{proof}
		Please see Appendix~\ref{app1}.
	\end{proof}
	
	Next we state the assumption on~\eqref{eqcp} under which the next result in this section is derived.
	
	\paragraph*{\textbf{Condition (A$_{\textbf{D}}$)}}
	Assume that $ \exists $ $ \epsilon_{{max}} > 0 $ such that for all $ 0 \le \epsilon < \epsilon_{max} $, $ \forall~i \in \mN,~(\theta_i,\phi_i) \in \mF_i,~\theta_{-i} \in \Theta_{-i} $,
	\begin{equation} \label{EQAD}
	K_i(\theta_i,\phi_i,\theta_{-i},\epsilon) + K_i(\phi_i,\theta_i,\theta_{-i},\epsilon) > 0.
	\end{equation}

	Condition (A$_{\text{D}}$) and the ones below in Corollary~\ref{corolDSIC}, \ref{corolBSIC1}, \ref{corolBSIC2} and Condition (A$_{\text{B}}$) can be checked only at $ \epsilon = 0 $. By continuity of the optimization on parameter $ \epsilon $, this will imply that these conditions continue to hold for all $ 0 \le \epsilon < \epsilon_{max} $ for some $ \epsilon_{max} > 0 $.

	\begin{theorem} \label{thmDSIC}
		If Condition (A$_{\text{D}}$) is satisfied then $ \exists $ $ \tilde{\epsilon}_{max} > 0 $ such that for all $ 0 \le \epsilon < \tilde{\epsilon}_{max} $ there exist taxes $ \big(t_i(\phi)\big)_{\phi \in \Theta,i \in \mN} $ that satisfy DSIC (which implies implementation in Dominant strategies) and IR.
	\end{theorem}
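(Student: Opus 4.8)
The plan is to leverage Theorem~\ref{thmDSIC0}, which gives a clean necessary-and-sufficient condition~\eqref{EQthm1cond} for DSIC-plus-IR taxes to exist. The whole content of this result is therefore to show that Condition (A$_{\text{D}}$) forces the inequality~\eqref{EQthm1cond} to hold for all sufficiently small $\epsilon$. So my first step is to unpack what~\eqref{EQthm1cond} says for the two-type setting $\Theta_i = \{\theta_i^H,\theta_i^L\}$ and to rewrite the left-hand side of~\eqref{EQthm1cond} in terms of the quantities $K_i$ defined in~\eqref{EQkdef}.

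The key algebraic observation I would carry out is that the symmetrized sum $K_i(\theta_i,\phi_i,\theta_{-i},\epsilon) + K_i(\phi_i,\theta_i,\theta_{-i},\epsilon)$ appearing in~\eqref{EQAD} is closely related to the left-hand side of~\eqref{EQthm1cond}, differing only by terms that vanish at $\epsilon = 0$. Concretely, at $\epsilon=0$ we have $g_0(z)=z$, so each $K_i$ reduces to a sum of raw utilities $v$; writing out $K_i(\theta_i^H,\theta_i^L,\theta_{-i},0) + K_i(\theta_i^L,\theta_i^H,\theta_{-i},0)$ and cancelling the common ``own-allocation'' terms $v(\hat{x}(\theta_i^H,\theta_{-i});\theta_j)$ and $v(\hat{x}(\theta_i^L,\theta_{-i});\theta_j)$ summed over $j \ne i$, the expression collapses exactly to the four-term quantity in~\eqref{EQthm1cond}. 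Thus at $\epsilon = 0$ the condition in~\eqref{EQAD} \emph{is} precisely the left-hand side of~\eqref{EQthm1cond}, and (A$_{\text{D}}$) guarantees it is strictly positive.

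With that identification in hand, the argument is a continuity/perturbation step. The left-hand side of~\eqref{EQthm1cond}, call it $D_i(\theta_{-i},\epsilon)$, is a function of $\epsilon$ through both the optimizer $\hat{x}_\epsilon(\cdot)$ and the utilities it induces; since the optimization~\eqref{eqcp} depends continuously on $\epsilon$ (as noted in the remark following Condition~(A$_{\text{D}}$)), the map $\epsilon \mapsto D_i(\theta_{-i},\epsilon)$ is continuous. By the algebraic identity above, $D_i(\theta_{-i},0)$ equals the symmetrized $K_i$-sum at $\epsilon=0$, which is strictly positive by (A$_{\text{D}}$). Since there are finitely many pairs $(i,\theta_{-i})$, I can take the minimum of these finitely many strictly positive values and invoke continuity to produce a single threshold $\tilde{\epsilon}_{max}>0$ below which $D_i(\theta_{-i},\epsilon) \ge 0$ holds simultaneously for all $i$ and all $\theta_{-i}$. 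Condition~\eqref{EQthm1cond} is then satisfied, and Theorem~\ref{thmDSIC0} immediately yields the existence of the required taxes.

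The main obstacle I anticipate is the algebraic identification step: verifying carefully that the symmetrized $K_i$-sum really does reduce to the four-term expression in~\eqref{EQthm1cond} at $\epsilon=0$, including tracking exactly which summands over $j$ cancel and confirming that only the $j=i$ terms (the agent's own-type utility evaluations) survive. A secondary subtlety is that (A$_{\text{D}}$) is stated as strict positivity of the symmetrized $K_i$-sum over a whole interval $0 \le \epsilon < \epsilon_{max}$, whereas~\eqref{EQthm1cond} requires only the weaker nonnegativity of $D_i$; I should be careful that these two quantities coincide exactly at $\epsilon=0$ but may diverge for $\epsilon>0$, which is precisely why a fresh continuity argument (rather than a direct appeal to (A$_{\text{D}}$) at positive $\epsilon$) is needed, and why the resulting threshold $\tilde{\epsilon}_{max}$ need not equal $\epsilon_{max}$.
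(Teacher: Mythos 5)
Your proposal is correct, and its core is the same as the paper's: reduce to Theorem~\ref{thmDSIC0}, and observe that the four-term expression in~\eqref{EQthm1cond} coincides (after the $j\ne i$ cancellations) with the symmetrized sum $K_i(\theta_i^H,\theta_i^L,\theta_{-i},\epsilon)+K_i(\theta_i^L,\theta_i^H,\theta_{-i},\epsilon)$ up to terms that vanish at $\epsilon=0$. Where you diverge is in the finishing step. The paper derives this identity exactly for \emph{every} $\epsilon$ (equation~\eqref{EQd7}): the left side of~\eqref{EQthm1cond} equals the symmetrized $K_i$-sum plus $\epsilon\,\big(f(A_{HH})+f(A_{LL})-f(A_{HL})-f(A_{LH})\big)$; it then invokes Condition~(A$_{\text{D}}$) over the whole interval $[0,\epsilon_{max})$ for the first part and uses boundedness of $f$ to make the $\epsilon$-weighted correction negligible. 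You instead use the identity only at $\epsilon=0$, apply (A$_{\text{D}}$) only at $\epsilon=0$, and extend nonnegativity to a small interval by continuity of $\epsilon\mapsto\hat{x}_\epsilon(\cdot)$ (hence of $D_i$), finishing with a finite minimum over pairs $(i,\theta_{-i})$. Each route buys something: yours never needs $f$ to be bounded in this step and uses a weaker hypothesis (positivity at $\epsilon=0$ alone), which is in the spirit of the paper's own remark that (A$_{\text{D}}$) ``can be checked only at $\epsilon=0$''; but it leans essentially on continuity of the optimizer in $\epsilon$, which the paper asserts only in a side remark and never proves, and it yields a purely existential threshold, losing the explicit relation between $\epsilon_{max}$ and $\tilde{\epsilon}_{max}$ that the paper's proof extracts from~\eqref{EQd7} (and which the Remark after Theorem~\ref{thmDSIC} specifically advertises). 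Also, one small correction: a continuity argument is not strictly ``needed'' in place of a direct appeal to (A$_{\text{D}}$) at positive $\epsilon$ --- the paper's proof is precisely such a direct appeal, made workable by tracking the exact $\epsilon f$ correction term.
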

	\begin{remark} 
		The relation between $ \epsilon_{max} $ and $ \tilde{\epsilon}_{max} $ is defined in the proof, specifically via the relation in~\eqref{EQd7}.
	\end{remark}
	\begin{proof}
		Please see Appendix~\ref{app2}.
	\end{proof}
	
	Stated below is a corollary to the above result, which uses a stricter condition.

	\begin{corollary} \label{corolDSIC}
		If $ \exists $ $ \epsilon_{max} > 0 $ such that for all $ 0 \le \epsilon < \epsilon_{max} $,
		\begin{equation} \label{EQCR1}
		\min_{i \in \mN} \min_{(\theta_i,\phi_i) \in \mF_i} \min_{\theta_{-i} \in \Theta_{-i}} K_i(\theta_i,\phi_i,\theta_{-i},\epsilon) > 0,
		\end{equation}
		then $ \exists $ $ \tilde{\epsilon}_{max} > 0 $ such that for all $ 0 \le \epsilon < \tilde{\epsilon}_{max} $ there exist taxes that satisfy DSIC and IR.
	\end{corollary}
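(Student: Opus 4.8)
The plan is to reduce the corollary to Theorem~\ref{thmDSIC} by showing that the hypothesis~\eqref{EQCR1} is strictly stronger than Condition (A$_{\text{D}}$). Once that implication is established, the existence of the desired taxes follows immediately from the theorem, with the same threshold $\tilde{\epsilon}_{max}$.

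First I would unpack the meaning of~\eqref{EQCR1}. Since its left-hand side is a minimum over the finite index set consisting of $i \in \mN$, $(\theta_i,\phi_i) \in \mF_i$ and $\theta_{-i} \in \Theta_{-i}$, asserting that this minimum is strictly positive for every $0 \le \epsilon < \epsilon_{max}$ is equivalent to asserting that $K_i(\theta_i,\phi_i,\theta_{-i},\epsilon) > 0$ for each individual triple in that index set and each such $\epsilon$. Next I would exploit the symmetry of the set $\mF_i$: by its definition $\mF_i = \{(\psi_i,\xi_i) \in \Theta_i^2 \mid \psi_i \ne \xi_i\}$, so $(\theta_i,\phi_i) \in \mF_i$ holds if and only if $(\phi_i,\theta_i) \in \mF_i$. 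Consequently the pointwise positivity from the previous step yields both $K_i(\theta_i,\phi_i,\theta_{-i},\epsilon) > 0$ and $K_i(\phi_i,\theta_i,\theta_{-i},\epsilon) > 0$ for every admissible index and $\epsilon$; adding these two strict inequalities recovers exactly the inequality~\eqref{EQAD} defining Condition (A$_{\text{D}}$), with the same $\epsilon_{max}$.

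Having verified that~\eqref{EQCR1} implies Condition (A$_{\text{D}}$), I would simply invoke Theorem~\ref{thmDSIC} to obtain a threshold $\tilde{\epsilon}_{max} > 0$ and taxes satisfying DSIC and IR for all $0 \le \epsilon < \tilde{\epsilon}_{max}$, completing the argument. There is no genuine obstacle here: the mathematical content of the corollary is entirely inherited from Theorem~\ref{thmDSIC}, and the only thing to check is the (one-line) logical implication between the two hypotheses, together with the elementary observation that summing two strictly positive quantities preserves strict positivity. The reason the corollary is worth stating separately is practical rather than mathematical --- condition~\eqref{EQCR1} involves only the individual differences $K_i$ rather than their symmetric pairwise sums, so it is typically easier to verify at $\epsilon = 0$ than~\eqref{EQAD}, and by the continuity remark following Condition (A$_{\text{D}}$) checking it at $\epsilon = 0$ already suffices to guarantee it on a nontrivial interval $[0,\epsilon_{max})$.
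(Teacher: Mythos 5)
Your proof is correct and takes essentially the same route as the paper: you show that~\eqref{EQCR1} implies Condition (A$_{\text{D}}$) --- using the symmetry of $\mF_i$ and the fact that the sum of two strictly positive terms is strictly positive --- and then invoke Theorem~\ref{thmDSIC}. The paper's own proof is precisely this one-line reduction, and it likewise notes, as you do, that the resulting $\tilde{\epsilon}_{max}$ is the same one derived in the theorem's proof, since it is determined only by~\eqref{EQd7}.
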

	\begin{proof}
		As~\eqref{EQCR1} implies~\eqref{EQAD}, the Corollary follows from Theorem~\ref{thmDSIC}. Also, as long as the above assumption holds, the value of $ \tilde{\epsilon}_{max} $ will be same as the one derived in the previous proof, since it is only defined by the expression in~\eqref{EQd7}.
	\end{proof}

	\section{Bayesian Implementation} \label{secbne}
	
	In this section the type sets are of arbitrary size. Dominant strategy implementation is too restrictive for the general scenario, hence the next best reasonable solution concept - Bayesian implementation, is considered.
	
	In a Bayesian set up, agents have a prior distribution on the type profile. For agent $ i $, prior is $ p_i \in \Delta(\Theta) $. For basic regularity assume that the prior gives non-zero probability on all points of $ \Theta $ (this is only a technical condition and the ensuing results can be proved without it as well). These priors are assumed to be \emph{common knowledge} between agents and designer - hence there is no need to introduce second order beliefs over the priors and so on.
	
	The mechanism used here is a direct mechanism with allocation function $ \hat{x}_\epsilon(\cdot) $ (same as before). Given the allocation and tax functions $ (\hat{x}_\epsilon,t): \Theta \rightarrow \mX \times \bbR^N $, the utility function in the Bayesian set up for strategy profile $ \{\sigma_j:\Theta_j \rightarrow \Theta_j\}_{j \in \mN} $ is given by, $ \forall~i \in \mN $,
	\begin{equation}
	\tilde{u}_i(\sigma \mid \theta_i) =  \mathbb{E}_{p_i(\cdot \mid \theta_i)} 
	\big[ v\big( \hat{x}_\epsilon(\sigma(\theta)); \theta_i \big) - t_i\big( \sigma(\theta) \big) \big].
	\end{equation}
	where $ \theta_i $ is the true type of agent $ i $.
	The Bayesian implementation condition--also known as Bayesian Strategy Incentive Compatibility (BSIC)--for the direct mechanism is that the truthful strategy $ \sigma_i^\star(\theta_i) = \theta_i, ~\forall~\theta_i \in \Theta_i,~\forall~i \in \mN $ must be a Bayesian Nash equilibrium (BNE)\footnote{Strategy $ \sigma^\star = \big( \sigma_i^\star:\Theta_i \rightarrow \Theta_i \big)_{i \in \mN} $ is a BNE if $ \forall i \in \mN, ~\forall~\theta_i \in \Theta_i, ~\forall~\sigma_i^\prime: \Theta_i \rightarrow \Theta_i $; $ \tilde{u}_i(\sigma_i^\star,\sigma_{-i}^\star \mid \theta_i) \ge \tilde{u}_i(\sigma_i^\prime,\sigma_{-i}^\star \mid \theta_i) $.} for the induced Bayesian game.
	
	In addition, it is required the tax function to have the \emph{Strong Budget Balance} (SBB) property, i.e.,
	\begin{equation}
	\sum_{i \in \mN} t_i(\psi) = 0, ~~~\forall ~\psi \in \Theta.
	\end{equation}
	
	We restrict attention to optimization~\eqref{eqcp} and priors $ \{p_i(\cdot)\}_{ i \in \mN } $ that satisfy the following conditions.

	\paragraph*{\textbf{Condition (A$_{\textbf{B}}$)}} 
	Assume that $ \exists $ $ \epsilon_{max} > 0 $ such that $ H(\epsilon) > 0 $ for all $ 0 \le \epsilon < \epsilon_{max} $, where
	\begin{subequations}
		\label{EQthm2}
		\begin{gather}
		H(\epsilon) \coloneqq \min_{i \in \mN} \min_{(\theta_i,\phi_i) \in \mF_i} \mathbb{E}_{p_i(\cdot \mid \theta_i)} \big[h_i(\theta_i,\phi_i,\theta_{-i},\epsilon)\big],
		\\
		\nonumber 
		h_i(\theta_i,\phi_i,\theta_{-i},\epsilon)
		\coloneqq
		\\
		\sum_{j \in \mN} \Big(v(\hat{x}_\epsilon(\theta_i,\theta_{-i});\theta_j) - v(\hat{x}_\epsilon(\phi_i,\theta_{-i});\theta_j)\Big).
		\end{gather}
	\end{subequations}

	This is the most general form of the assumption needed about the optimization here; after proving Theorem~\ref{thmBSIC}, corollaries are stated with stricter assumptions than above.

	\paragraph*{\textbf{Condition (B)}}
	Assume that for any non-zero vector $ R:=(R(\psi))_{\psi\in\Theta} $ with $ R(\psi)\in \bbR $,
	there does not exist any
	$ \lambda:= (\lambda_k(\theta_k,\phi_k))_{k \in \mN, (\theta_k,\phi_k) \in \mF_k} $ with
	$\lambda_k(\theta_k,\phi_k) \in \bbR_+  $
	such that $ \forall~i \in \mN $, $ \forall~\psi \in \Theta $,
	\begin{multline}
	p_i(\psi_{-i} \mid \psi_i) \sum_{\substack{\phi_i \in \Theta_i \\ \phi_i \ne \psi_i}} \lambda_i(\phi_i,\psi_i) - \sum_{\substack{\theta_i \in \Theta_i \\ \theta_i \ne \psi_i}} p_i(\psi_{-i} \mid \theta_i) \lambda_i(\theta_i,\psi_i)
	\\
	= R(\psi).
	\end{multline}

	This condition was first introduced in~\cite{dAGV79} and subsumes the case of conditionally independent priors, i.e., $ p_j(\psi_{-j} \mid \psi_j) = p_j(\psi_{-j} ) $ (refer to~\cite{dAGV79} for an example of priors which are not conditionally independent but still satisfy the condition above).

	The second contribution of this paper is summarized in the following Theorem.
	
	\begin{theorem} \label{thmBSIC}
		If Conditions (A$_{\text{B}}$) and (B) are satisfied then 
		for all $ 0 \le \epsilon < \epsilon_{max} $, there exist taxes $ \big(t_i(\phi)\big)_{\phi \in \Theta,i \in \mN} $ that satisfy BSIC (which implies implementation in BNE) and SBB. 
	\end{theorem}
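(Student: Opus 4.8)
The plan is to cast the requirement as the feasibility of a finite linear system in the tax variables $\big(t_i(\psi)\big)_{i\in\mN,\,\psi\in\Theta}$, treated as free reals, and then to certify feasibility through a theorem of the alternative. Writing out the BSIC constraint for agent $i$ deterred from reporting $\phi_i$ when its true type is $\theta_i$, and using that the other agents report truthfully, one rearranges it into the linear inequality $\mathbb{E}_{p_i(\cdot\mid\theta_i)}\big[t_i(\phi_i,\theta_{-i})-t_i(\theta_i,\theta_{-i})\big]\ge b_i(\theta_i,\phi_i)$, where $b_i(\theta_i,\phi_i):=\mathbb{E}_{p_i(\cdot\mid\theta_i)}\big[v(\hat{x}_\epsilon(\phi_i,\theta_{-i});\theta_i)-v(\hat{x}_\epsilon(\theta_i,\theta_{-i});\theta_i)\big]$. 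Note that this right-hand side involves only agent $i$'s own utility, i.e.\ it is the $j=i$ term of $-h_i$ in expectation; the full sum over $j$ of Condition~(A$_{\text{B}}$) is conspicuously absent from the raw incentive constraints. The SBB requirement contributes the linear equalities $\sum_{i\in\mN}t_i(\psi)=0$, one per profile $\psi$. The goal is to show this combined system is always solvable.

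Next I would invoke a Motzkin/Gale theorem of the alternative: the system is infeasible if and only if there exist multipliers $\lambda=(\lambda_i(\theta_i,\phi_i))\ge 0$ for the BSIC inequalities and free multipliers $R=(R(\psi))$ for the SBB equalities such that the aggregated constraint has a vanishing coefficient on every variable $t_i(\psi)$ while the aggregated right-hand side $\sum_{i,\theta_i,\phi_i}\lambda_i(\theta_i,\phi_i)\,b_i(\theta_i,\phi_i)$ is strictly positive. Collecting the coefficient of each $t_i(\psi)$ one obtains, for every $i$ and $\psi$, a flow-balance expression $E_i(\psi;\lambda)$ in $\lambda_i$ (an inflow term weighted by $p_i(\psi_{-i}\mid\cdot)$ set against an outflow term) plus the common contribution $R(\psi)$ from SBB. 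Setting these coefficients to zero reproduces, up to the labelling convention for the two arguments of $\lambda$, precisely the homogeneous linear system of Condition~(B). Consequently Condition~(B) forces $R\equiv 0$ in any putative infeasibility certificate, leaving only the balance equations $E_i(\psi;\lambda)=0$ on the nonnegative multipliers.

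The crux is then to show that these balance equations convert the own-utility right-hand sides $b_i$ into the full social-welfare differences $h_i$. Concretely, I would add to $\sum_{i,\theta_i,\phi_i}\lambda_i b_i$ the ``externality'' terms $\sum_{j\ne i}\big(v(\hat{x}_\epsilon(\theta_i,\theta_{-i});\theta_j)-v(\hat{x}_\epsilon(\phi_i,\theta_{-i});\theta_j)\big)$, weighted by $\lambda_i$ and $p_i(\cdot\mid\theta_i)$, and observe that, after grouping the resulting sum by the reported type $a$ at which the allocation is evaluated, the coefficient multiplying each $\sum_{j\ne i}v(\hat{x}_\epsilon(a,\theta_{-i});\theta_j)$ is exactly the balance expression $E_i\big((a,\theta_{-i});\lambda\big)$, which vanishes. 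Hence the externality terms contribute nothing, and one is left with the identity
\[
\sum_{i,\theta_i,\phi_i}\lambda_i(\theta_i,\phi_i)\,b_i(\theta_i,\phi_i)=-\sum_{i,\theta_i,\phi_i}\lambda_i(\theta_i,\phi_i)\,\mathbb{E}_{p_i(\cdot\mid\theta_i)}\big[h_i(\theta_i,\phi_i,\theta_{-i},\epsilon)\big].
\]
This is the step in which the sum over all $j$ in Condition~(A$_{\text{B}}$) materialises out of the purely self-interested incentive constraints; it is the expected-externality cancellation of d'Aspremont--G\'erard-Varet adapted to the transformed objective, and I expect it to be the main obstacle to carry out cleanly (and the place where the exact $\lambda$-indexing of Condition~(B) must be matched).

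Finally, Condition~(A$_{\text{B}}$) gives $\mathbb{E}_{p_i(\cdot\mid\theta_i)}[h_i]\ge H(\epsilon)>0$ for every $i$ and $(\theta_i,\phi_i)\in\mF_i$ on the range $0\le\epsilon<\epsilon_{max}$, so with $\lambda\ge 0$ the right-hand side of the identity is nonpositive. Thus $\sum_{i,\theta_i,\phi_i}\lambda_i b_i\le 0$, contradicting the strict positivity required of an infeasibility certificate (the degenerate case $\lambda=0$ being immediate). No certificate can exist, and the theorem of the alternative yields taxes satisfying BSIC and SBB for every $0\le\epsilon<\epsilon_{max}$. A useful sanity check along the way is the conditionally independent case $p_i(\theta_{-i}\mid\theta_i)=p_i(\theta_{-i})$, where the balance equations reduce to inflow-equals-outflow at each type, Condition~(B) holds automatically, and the expected-externality taxes can be written down explicitly, confirming both the cancellation and the role of Condition~(B).
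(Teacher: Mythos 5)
Your outline follows essentially the same route as the paper's proof: reduce BSIC and SBB to a finite linear system, invoke a theorem of the alternative, use Condition~(B) to annihilate the free part $R$ of any dual certificate, cancel the externality terms against the resulting flow-balance equations so that the own-utility right-hand sides $b_i$ become the full social-welfare differences $h_i$, and conclude with Condition~(A$_{\text{B}}$). The one structural difference is cosmetic: the paper pre-encodes SBB through the d'AGV parametrization $t_i(\psi)=z_i(\psi)-\frac{1}{N-1}\sum_{j\ne i}z_j(\psi)$ and applies plain Farkas to an inequality-only system, whereas you keep SBB as explicit equalities with free multipliers $R(\psi)$; these are equivalent, and your $R(\psi)$ is precisely the agent-independent quantity the paper extracts in~\eqref{EQcondlambda}.

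There is, however, one genuine gap: your claim that setting the $t_i(\psi)$-coefficients to zero reproduces Condition~(B) ``up to the labelling convention for the two arguments of $\lambda$.'' It does not. The raw vanishing-coefficient system reads, for all $i\in\mN$ and $\psi\in\Theta$,
\begin{equation*}
p_i(\psi_{-i}\mid\psi_i)\!\!\sum_{\phi_i\ne\psi_i}\!\!\lambda_i(\psi_i,\phi_i)\;-\;\sum_{\theta_i\ne\psi_i}\!p_i(\psi_{-i}\mid\theta_i)\,\lambda_i(\theta_i,\psi_i)\;=\;R(\psi),
\end{equation*}
while Condition~(B) concerns the system with $\lambda_i(\phi_i,\psi_i)$ in the first sum. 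Globally swapping the two arguments of $\lambda$ repairs the first sum but simultaneously corrupts the second (it turns $\lambda_i(\theta_i,\psi_i)$ into $\lambda_i(\psi_i,\theta_i)$), so no relabelling maps one system onto the other, and Condition~(B) cannot be invoked directly. The paper bridges exactly this point with Lemma~\ref{lemID}: any $\lambda\ge 0$ solving the raw system satisfies the unweighted conservation law $\sum_{\theta_i\ne\psi_i}\lambda_i(\theta_i,\psi_i)=\sum_{\phi_i\ne\psi_i}\lambda_i(\psi_i,\phi_i)$ for every $i,\psi_i$, and its proof is not bookkeeping: one sums the balance equations over $\psi_{-i}$ and crucially uses the fact that the \emph{same} vector $R(\psi)$ appears in every agent's equation, so the marginal sum $\sum_{\psi_{-i}}R(\psi)$ can be re-evaluated through a different agent $k\ne i$, where it telescopes to zero. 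Only after substituting this conservation law into the first sum does the dual system take the form of Condition~(B), forcing $R\equiv 0$ and yielding the balance equations \eqref{EQlmain} that you then use (correctly) for the externality cancellation and the final appeal to Condition~(A$_{\text{B}}$). Your argument is repairable, but this lemma is a missing ingredient of the certificate analysis, not a notational convention.
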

	\begin{proof}
		Please see Appendix~\ref{app3}.
	\end{proof}

	Stated below are corollaries which successively use stricter conditions.

	\begin{corollary} \label{corolBSIC1}
		If $ \exists $ $ \epsilon_{max} > 0 $ such that for all $ 0 \le \epsilon < \epsilon_{max} $, 
		\begin{equation} \label{EQCRBy1}
		\min_{i \in \mN} \min_{(\theta_i,\phi_i) \in \mF_i} \min_{\theta_{-i} \in \Theta_{-i}} h_i(\theta_i,\phi_i,\theta_{-i},\epsilon) > 0
		\end{equation}
		and Condition (B) is satisfies then for all $ 0 \le \epsilon < \epsilon_{max} $ there exist taxes which satisfy BSIC and SBB.
	\end{corollary}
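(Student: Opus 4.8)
The plan is to obtain this result as an immediate consequence of Theorem~\ref{thmBSIC}, in exact parallel to the way Corollary~\ref{corolDSIC} was derived from Theorem~\ref{thmDSIC}. The essential observation is that hypothesis~\eqref{EQCRBy1} is strictly stronger than Condition~(A$_{\text{B}}$): the former demands positivity of $h_i(\theta_i,\phi_i,\theta_{-i},\epsilon)$ uniformly over every $\theta_{-i}\in\Theta_{-i}$, whereas the latter only demands positivity of a $p_i$-weighted average of these same quantities. So I would not re-establish anything about the taxes directly; instead I would show the implication~\eqref{EQCRBy1} $\Rightarrow$ (A$_{\text{B}}$) and then invoke the theorem, carrying Condition~(B) over unchanged since it is assumed in both statements.

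First I would fix an arbitrary $i \in \mN$ and $(\theta_i,\phi_i) \in \mF_i$ and examine the inner expectation $\mathbb{E}_{p_i(\cdot \mid \theta_i)}[h_i(\theta_i,\phi_i,\theta_{-i},\epsilon)] = \sum_{\theta_{-i} \in \Theta_{-i}} p_i(\theta_{-i} \mid \theta_i)\, h_i(\theta_i,\phi_i,\theta_{-i},\epsilon)$. Because $p_i(\cdot \mid \theta_i)$ is a probability distribution---its weights are non-negative and sum to one---this expectation is a convex combination of the finitely many values $\{h_i(\theta_i,\phi_i,\theta_{-i},\epsilon)\}_{\theta_{-i} \in \Theta_{-i}}$ and is therefore bounded below by their minimum over $\theta_{-i}$. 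Minimizing this inequality over $i$ and $(\theta_i,\phi_i)$ then gives $H(\epsilon) \ge \min_{i \in \mN} \min_{(\theta_i,\phi_i) \in \mF_i} \min_{\theta_{-i} \in \Theta_{-i}} h_i(\theta_i,\phi_i,\theta_{-i},\epsilon)$, and the right-hand side is exactly the quantity assumed positive in~\eqref{EQCRBy1} for all $0 \le \epsilon < \epsilon_{max}$. Hence $H(\epsilon) > 0$ on the same range, which is precisely Condition~(A$_{\text{B}}$) with the same $\epsilon_{max}$.

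With (A$_{\text{B}}$) thus established and Condition~(B) assumed in the statement, both hypotheses of Theorem~\ref{thmBSIC} hold for every $0 \le \epsilon < \epsilon_{max}$, and the theorem yields taxes $\big(t_i(\phi)\big)_{\phi \in \Theta, i \in \mN}$ satisfying BSIC and SBB on that range, which completes the argument. I do not anticipate a genuine obstacle here; the only point requiring care is that the domination step relies on non-negativity and normalization of the prior weights, which is guaranteed by the standing assumption that each $p_i$ is a probability measure placing positive mass on every type profile. This is in fact the single place where the probabilistic structure of $p_i$ is used: the convex-combination bound would fail if signed or sub-probability weights were permitted.
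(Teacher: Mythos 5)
Your proposal is correct and follows exactly the paper's route: the paper disposes of this corollary by noting that \eqref{EQCRBy1} implies Condition (A$_{\text{B}}$) (since the expectation $\mathbb{E}_{p_i(\cdot\mid\theta_i)}[h_i]$ is a convex combination of the $h_i$ values and hence bounded below by their minimum over $\theta_{-i}$) and then invoking Theorem~\ref{thmBSIC} with Condition (B) carried over unchanged. You have merely spelled out the domination step that the paper leaves implicit, which is a fine addition but not a different argument.
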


	\begin{corollary} \label{corolBSIC2}
		If $ \exists $ $ \epsilon_{max} > 0 $ such that for all $ 0 \le \epsilon < \epsilon_{max} $, and $ \forall $ $ i \in \mN $, $ (\theta_i,\phi_i) \in \mF_i $, $ \theta_{-i} \in \Theta_{-i} $,
		\begin{subequations} 
			\begin{gather}
			\label{EQCRBy2}
			K(\epsilon) + \epsilon \: g_i(\theta_i,\phi_i,\theta_{-i},\epsilon) > 0, 
			\\
			where \quad  
			K(\epsilon) \coloneqq \min_{i \in \mN} \min_{(\theta_i,\phi_i) \in \mF_i} \min_{\theta_{-i} \in \Theta_{-i}} K_i(\theta_i,\phi_i,\theta_{-i},\epsilon),
			\\
			\nonumber 
			g_i(\theta_i,\phi_i,\theta_{-i},\epsilon) 
			\\
			\coloneqq \sum_{j \in \mN}  f(v(\hat{x}_\epsilon(\theta_i,\theta_{-i});\theta_j)) - f(v(\hat{x}_\epsilon(\phi_i,\theta_{-i});\theta_j))
			\end{gather}
		\end{subequations} 
		and Condition (B) is satisfied then for all $ 0 \le \epsilon < \epsilon_{max} $ there exist taxes which satisfy BSIC and SBB.
	\end{corollary}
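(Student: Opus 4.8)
The plan is to derive this corollary from Corollary~\ref{corolBSIC1} by showing that its hypothesis~\eqref{EQCRBy2} is strictly stronger than the hypothesis~\eqref{EQCRBy1} used there. Once the pointwise positivity $ \min_{i \in \mN} \min_{(\theta_i,\phi_i) \in \mF_i} \min_{\theta_{-i} \in \Theta_{-i}} h_i(\theta_i,\phi_i,\theta_{-i},\epsilon) > 0 $ is in hand for every $ 0 \le \epsilon < \epsilon_{max} $, invoking Corollary~\ref{corolBSIC1} together with the standing Condition~(B) immediately yields taxes satisfying BSIC and SBB, and nothing further is required. The whole content therefore reduces to an algebraic identity linking $ K_i $, $ h_i $ and $ g_i $ that is forced by the affine form $ g_\epsilon(z) = z - \epsilon f(z) $.

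First I would expand the definition~\eqref{EQkdef} of $ K_i $ by substituting $ g_\epsilon(z) = z - \epsilon f(z) $ into each summand and separating the linear part from the $ f $-part. The linear contribution $ \sum_{j \in \mN}\big(v(\hat{x}_\epsilon(\theta_i,\theta_{-i});\theta_j) - v(\hat{x}_\epsilon(\phi_i,\theta_{-i});\theta_j)\big) $ is exactly $ h_i(\theta_i,\phi_i,\theta_{-i},\epsilon) $, while the $ f $-terms, carrying the common factor $ -\epsilon $, assemble into $ -\epsilon\, g_i(\theta_i,\phi_i,\theta_{-i},\epsilon) $. This gives the identity
\begin{equation*}
K_i(\theta_i,\phi_i,\theta_{-i},\epsilon) = h_i(\theta_i,\phi_i,\theta_{-i},\epsilon) - \epsilon\, g_i(\theta_i,\phi_i,\theta_{-i},\epsilon),
\end{equation*}
equivalently $ h_i = K_i + \epsilon\, g_i $, valid for every index and every $ \epsilon \ge 0 $.

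Next I would combine this identity with the minimality $ K(\epsilon) \le K_i(\theta_i,\phi_i,\theta_{-i},\epsilon) $, which holds for all indices by the definition of $ K(\epsilon) $. Adding the common quantity $ \epsilon\, g_i(\theta_i,\phi_i,\theta_{-i},\epsilon) $ to both sides of this inequality preserves it since $ \epsilon \ge 0 $, so
\begin{equation*}
h_i(\theta_i,\phi_i,\theta_{-i},\epsilon) = K_i + \epsilon\, g_i \ge K(\epsilon) + \epsilon\, g_i(\theta_i,\phi_i,\theta_{-i},\epsilon) > 0,
\end{equation*}
where the final strict inequality is precisely~\eqref{EQCRBy2}. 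Because $ \Theta $ is finite, this strict pointwise positivity of $ h_i $ over all $ i \in \mN $, $ (\theta_i,\phi_i) \in \mF_i $ and $ \theta_{-i} \in \Theta_{-i} $ passes to a strictly positive minimum, which is exactly~\eqref{EQCRBy1}.

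I expect no genuinely hard step here: the argument is a one-line consequence of the decomposition $ h_i = K_i + \epsilon\, g_i $ and the minimality of $ K(\epsilon) $. The only points worth checking carefully are that $ \epsilon\, g_i $ may have either sign, which is harmless because the same term is added to both sides of $ K_i \ge K(\epsilon) $ and $ \epsilon \ge 0 $; and the boundary case $ \epsilon = 0 $, where~\eqref{EQCRBy2} collapses to $ K(0) > 0 $ and the correction term vanishes, so that~\eqref{EQCRBy1} reduces to $ \min K_i > 0 $ consistently. Finiteness of $ \Theta $ is what upgrades pointwise positivity of $ h_i $ to the strictly positive minimum demanded by Corollary~\ref{corolBSIC1}.
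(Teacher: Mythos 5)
Your proposal is correct and takes essentially the same route as the paper, whose entire proof is the one-line assertion that \eqref{EQCRBy2} $\Rightarrow$ \eqref{EQCRBy1} $\Rightarrow$ Condition (A$_{\text{B}}$), so that Theorem~\ref{thmBSIC} applies. What you add is an explicit verification of the first implication via the identity $h_i = K_i + \epsilon\, g_i$ (forced by $g_\epsilon(z) = z - \epsilon f(z)$) together with the minimality of $K(\epsilon)$ --- the same decomposition the paper itself uses inside the proof of Theorem~\ref{thmBSIC} when rearranging $\eta_2$.
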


	Since~\eqref{EQCRBy2} $ \Rightarrow $ \eqref{EQCRBy1} $ \Rightarrow $ Condition (A$_{\text{B}}$), hence these Corollaries follow from Theorem~\ref{thmBSIC}.

	%\clearpage 

	\section{Full Bayesian Implementation} \label{secfbi}
	
	In accordance with the majority of the literature on Bayesian implementation (see for e.g.~\cite{borgers2015book,jackson1991bayesian,jackson,hartline2012bayesian}), the main result in the previous section aimed at BSIC condition for implementation. This ensures that truth-telling is a BNE but gives no information about other possible BNE. In general this might be a problem, since when the Bayesian game is played, the designer cannot predict in advance which BNE will be achieved.
	The justification used in such a situation is that of ``focusing''\footnote{This is the same justification as the one used for working without loss of generality, with the Revelation principle and direct mechanisms.}. Along with the mechanism, the selected BNE is also announced by the designer. All the agents are then focused towards this particular BNE and in anticipation that others will be playing according to it, they too choose to play it.
	
	In this section we modify our mechanism by augmenting each agents' message space to include one additional message - which takes values in a continuous space - with the specific aim of achieving full implementation i.e. truth-telling as the only BNE. This will add to the robustness of the mechanism and will put to rest any equilibrium selection issues. Such a modification only requires that the BSIC constraints (see~\eqref{EQbsic}) can be satisfied with strict inequality. This gives the designer room to alter taxes.
	
	For general Bayesian mechanism design, authors in~\cite{mookherjee1990implementation} have derived sufficient conditions under which augmenting the message space results in full implementation. However a specific mechanism is not derived.

	To keep analysis straightforward we make the next assumptions.
	\begin{assumption}
		{Assume private consumption i.e. any agent $ i $'s utility is affected only by level of his/her consumption $ x_i $.}
	\end{assumption}
	\begin{assumption}
		{Assume that $ v(x_i,\theta_i) $ is strictly concave and is differentiable w.r.t. $ x_i $ at all values of $ x_i,\theta_i $.}
	\end{assumption}
	Extending to general public goods is possible, but will make the proofs more technical and obfuscate the basic idea behind them. Differentiability of the utility function is essential to get an explicit modification.

	The new message space, allocation and taxes for any agent $ i $ are 
	\begin{subequations}
		\label{EQnewmech}
		\begin{gather} 
		\mathcal{M}_i = \Theta_i \times [-\delta,+\delta], 
		\quad
		m_i = (\phi_i,y_i).
		\\
		\tilde{x}_{\epsilon,i}(m) = \hat{x}_{\epsilon,i}(\phi) + y_i, 
		\\ 
		\tilde{t}_i(m) = 
		\left\{
		\begin{array}{ll}
		\hat{t}_i(\phi) + y_i v^\prime(\hat{x}_{\epsilon,i}(\phi);\phi_i).  & \mbox{if } y \in (-\delta,+\delta), \\
		B & \mbox{if } y = \pm \delta.
		\end{array}
		\right.
		\end{gather} 
	\end{subequations}
	where $ \hat{x}_{\epsilon}(\phi) $ is the allocation from~\eqref{eqcp} and $ \hat{t}_i(\phi) $ is a tax designed for BSIC and SBB (from the previous section). Also, $ \delta,B > 0 $ are constants chosen by the designer. Here $ \delta $ will be a small constant and how it is chosen will be clear in the proof of Theorem~\ref{thmfbi}. The constant $ B $ is chosen to be large enough so 
	that no rational agent will ever choose message $ y_i = \pm \delta $. This is possible because due to the type sets being discrete, the utilities are bounded. The modification above allows agents to change their allocation by a small amount $ y_i \in [-\delta,+\delta] $. For this increase/decrease in allocation they are charged/subsidized at the ``market'' price $ v^\prime(\hat{x}_{\epsilon,i}(\phi);\phi_i) $ corresponding to the truth-telling strategy. This modification serves the purpose of giving agents more delicate control of their allocation and utility than in a discrete set-up. Technically, this allows the designer to disrupt any BNE at which the price an agent is capable of paying (i.e. his/her derivative without tax terms) doesn't match the market price designed for truthful strategies.

	\begin{theorem} \label{thmfbi}
		Assuming there exists taxes that satisfy BSIC constraints (see for instance~\eqref{EQbsicA}) with strict inequality, for the mechanism defined in~\eqref{EQnewmech} at any true type profile $ \theta = (\theta_i)_{i \in \mN} $,
		\begin{enumerate}
			\item[(1)]  message $ m^\star = (\theta,\underline{0}) $ is a BNE.
		\end{enumerate} 	
		Furthermore, assuming $ \forall~i \in \mN,~(\theta_i,\phi_i) \in \mF_i $, the sign of the quantity $ v^\prime(\hat{x}_{\epsilon,i}(\phi_i,\phi_{-i});\theta_i) - v^\prime(\hat{x}_{\epsilon,i}(\phi_i,\phi_{-i});\phi_i) $ remains same  $ \forall~\phi_{-i} \in \Theta_{-i} $,
		\begin{enumerate}	
			\item[(2)] any message other than $ m^\star $ isn't a BNE. 
		\end{enumerate}
		Hence the mechanism in~\eqref{EQnewmech} achieves full Bayesian implementation.
	\end{theorem}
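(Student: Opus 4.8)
The plan is to prove the two claims separately, using the strict BSIC slack for statement (1) and the sign hypothesis for statement (2), with the common thread that the constant $\delta$ must be chosen small enough for both. Throughout I would exploit the private-consumption assumption (Assumption 1): agent $i$'s allocation $\hat{x}_{\epsilon,i}(\phi)+y_i$ and tax depend on the others only through their reported types $\phi_{-i}$, never through their $y$-messages. Hence each agent's best-response problem reduces to choosing $(\phi_i,y_i)$ against the induced distribution of $\phi_{-i}$, and by strict concavity of $v(\cdot;\theta_i)$ (Assumption 2) the $y$-payoff is a strictly concave function of $y_i$ for each realized $\phi_{-i}$.

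For statement (1) I would fix agent $i$ of type $\theta_i$, let the others play $(\theta_{-i},\underline{0})$, and bound the payoff of an arbitrary report $(\phi_i,y_i)$. Linearizing by concavity, $v(\hat{x}_{\epsilon,i}(\phi_i,\theta_{-i})+y_i;\theta_i)\le v(\hat{x}_{\epsilon,i}(\phi_i,\theta_{-i});\theta_i)+y_i\,v'(\hat{x}_{\epsilon,i}(\phi_i,\theta_{-i});\theta_i)$, the expected payoff is at most the un-augmented discrete-mechanism payoff of report $\phi_i$ plus the term $y_i\,\mathbb{E}[v'(\hat{x}_{\epsilon,i}(\phi_i,\theta_{-i});\theta_i)-v'(\hat{x}_{\epsilon,i}(\phi_i,\theta_{-i});\phi_i)]$, whose magnitude is at most $\delta C$, with $C$ a finite bound on the price differences (finite since $\Theta$ is discrete and $v$ is differentiable). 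For $\phi_i=\theta_i$ the extra term vanishes and strict concavity in $y_i$ forces $y_i=0$; for $\phi_i\neq\theta_i$ the strict BSIC inequality supplies a slack $\eta>0$ by which truth beats $\phi_i$ in the un-augmented mechanism, so choosing $\delta<\eta/C$ makes $(\theta_i,0)$ the strict best response and establishes $m^\star$ as a BNE.

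For statement (2) I would first invoke $B$: since types are discrete and utilities bounded, $B$ can be taken large enough that the boundary messages $y_i=\pm\delta$ are strictly dominated and never chosen, so in any BNE each played $y_i$ is interior and satisfies $\mathbb{E}_{\mu_{-i}}[v'(\hat{x}_{\epsilon,i}(\phi_i,\phi_{-i})+y_i;\theta_i)-v'(\hat{x}_{\epsilon,i}(\phi_i,\phi_{-i});\phi_i)]=0$, where $\mu_{-i}$ is the induced distribution of the others' reported types. If $\phi_i=\theta_i$ this root is uniquely $y_i=0$; the task is to rule out $\phi_i\neq\theta_i$. Here the sign hypothesis is decisive: for fixed $(\theta_i,\phi_i)$ the quantities $v'(\hat{x}_{\epsilon,i}(\phi_i,\phi_{-i});\theta_i)-v'(\hat{x}_{\epsilon,i}(\phi_i,\phi_{-i});\phi_i)$ share one (nonzero) sign over all $\phi_{-i}\in\Theta_{-i}$ and, being finitely many, are bounded away from zero by some $c>0$; since $\mathbb{E}_{\mu_{-i}}[\cdot]$ is a convex combination of same-signed terms, $|\mathbb{E}_{\mu_{-i}}[\cdot]|\ge c$ for every induced $\mu_{-i}$. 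With $|v''|$ bounded above by $M$ on the relevant finite range, the $y_i$ required to satisfy the first-order condition then obeys $|y_i|\ge c/M$, a bound uniform in the others' strategies. Choosing $\delta<c/M$ (and also $\delta<\eta/C$ from part (1)) places this root outside $[-\delta,+\delta]$, so on the admissible interval the $y$-payoff is strictly monotone, its supremum approached only at a boundary point carrying the penalty $B$; consequently no interior $y_i$ is a best response for any misreport $\phi_i\neq\theta_i$. Hence every agent and type reports truthfully in any BNE, the first-order condition then yields $y_i=0$, and $\sigma^\star$ is the unique BNE with realized message $m^\star$ at $\theta$.

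The hard part is the uniform lower bound $|y_i|\ge c/M$ for misreports, since it must hold against \emph{every} profile of the other agents' strategies, i.e.\ for every induced $\mu_{-i}$, not merely under the prior; this is exactly where the sign hypothesis does the heavy lifting, preventing the price mismatches at different $\phi_{-i}$ from cancelling in expectation (a weaker argument controlling only the prior-expected mismatch would fail). A secondary point to discharge carefully is that the two requirements on $\delta$ are mutually compatible and that the constants $C$, $c$, $M$ are genuinely finite and positive, all of which follow from the discreteness of $\Theta$ together with the strict concavity and differentiability of Assumptions 1 and 2, but should be stated so that a single $\delta$ serves both parts.
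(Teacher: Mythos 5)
Your proposal is correct and takes essentially the same route as the paper: part (1) from the strict BSIC slack plus a small-$\delta$ perturbation argument with the truthful report forcing $y_i=0$ by strict concavity, and part (2) from the necessity of an interior first-order condition (the boundary $y_i=\pm\delta$ being ruled out by the penalty $B$) together with the sign hypothesis, which prevents cancellation across the others' reported types and pushes the stationary $y_i$ uniformly outside $[-\delta,+\delta]$. The one caveat is your bound $|v''|\le M$: Assumption 2 grants only differentiability, not twice differentiability, so you should replace the Lipschitz estimate by uniform continuity of $v'$ on the compact set of reachable allocations (automatic, since a differentiable concave function has a continuous derivative), which is precisely the continuity argument the paper itself uses.
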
 
	\begin{proof}
		Please see Appendices~\ref{app5} and~\ref{app6}.
	\end{proof}
	
	The no-crossing condition above gives a more definite meaning to private types. In SoU or related problems like here~\eqref{eqcp}, the level of allocation is determined more by the rate of growth of utilities than the value. This is because at optimum, trade-offs between giving infinitesimal allocation to one agent vs. another will be determined by the slope of their utilities at those points. Thus by the above condition, one can implicitly order types - from the one providing smallest slope to the highest. 
	
	A stronger assumption (than above) that works is that the sign of the quantity $ v^\prime(x_i;\theta_i) - v^\prime(x_i;\phi_i) $ is same $ \forall~x_i \in \text{Proj.}(\mX) $ (projection onto the $ i- $th dimension). This might be useful in some cases since verifying this only requires utility functions and type sets whereas $ \epsilon $ and the optimization solution aren't required.
	
	The above assumptions are in fact two of many that can be made to get the result. The proof of Part (2) of Theorem~\ref{thmfbi} requires the quantity
	\begin{equation}
	\sum_{\theta_{-i} \in \Theta_{-i}} p_i(\theta_{-i} \mid \theta_i) \Big[ v^\prime(\hat{x}_i(\sigma(\theta));\theta_i) - 
	v^\prime(\hat{x}_i(\sigma(\theta));\sigma_i(\theta_i)) \Big]
	\end{equation} 
	to remain non-zero whenever $ \sigma_i(\theta_i) \ne \theta_i $. 
	In general, any condition that ensures that for any non-truthful strategy $ \sigma $ there exist an agent $ i $ and type $ \theta_i $ for which the above expression is non-zero, would do as well. For example, given type sets, utilities and the optimization, a condition on the priors might thus be sufficient. Furthermore, such condition wouldn't significantly reduce the set of possible priors since it only requires few equations to remain non-zero.

	Finally, the mechanism presented here has SBB property only at BNE. However, as in proof of Theorem~\ref{thmBSIC}, using the d'AGV form for budget balanced taxes i.e. $ \tilde{t}_i(m) = \tilde{z}_i(m) - \frac{1}{N-1} \sum_{j \ne i} \tilde{z}_j(m) $ a straightforward modification can make the mechanism in~\eqref{EQnewmech} budget balanced off-equilibrium as well.

	\section{Numerical Results\:/\:Examples} \label{secexam}
	
	This section contains numerical examples which have been evaluated to ascertain the scope of application of the existence results provided in the previous sections. In particular we are interested in evaluating the
	gain in overall fairness w.r.t. allocation and taxes, attained with the proposed method.

	Consider $ \Theta_i = S \triangleq \{\theta^H,\theta^L\} ~\forall~ i \in \mN $, where $ \theta^H > \theta^L $ and utilities are quadratic with \emph{private consumption} i.e. $ \forall~i \in \mN $, $ \forall~\theta_i \in S $, $ \forall~x \in \bbR^N_+ $,
	\begin{equation}
	v(x;\theta_i) \coloneqq 2 \theta_i x_i - \theta_i x_i^2.
	\end{equation}
	The constraint set is $ \mX = \{x \in \bbR_+^N \mid \sum_i x_i = 1 \} $ and $ g_\epsilon(z) = z - \epsilon z^2 $. Thus the Centralized optimization problem is
	\begin{equation}
	\label{EQexample}
	\hat{x}_\epsilon(\theta) = \arg\max_{x \in \mX} \sum_{i \in \mN}  \left( 2 \theta_i x_i - \theta_i x_i^2 \right) - \epsilon \left( 2 \theta_i x_i - \theta_i x_i^2 \right)^2.
	\end{equation}
	
	We consider the well known \emph{Gini coefficient} (GC)\footnote{Defined as the ratio of mean of the difference between every possible pair of data points with mean size. The lower the value of GC, the more equitable the allocation. GC of $ 0 $ is perfectly fair and GC of $ 1 $ is absolutely unfair - everyone except an individual receiving $ 0 $. GC is independent of scale, hence can also be used in comparing fairness across different settings.}  as a measure of disparity in allocation. At $ \epsilon =0 $ the objective is exactly the SoU. As $ \epsilon $ starts increasing from $ 0 $ onwards,  the optimization problem is transformed such that higher utilities will be weighed less than lower ones - thereby giving closer to equal distribution of allocation.
	
	For the numerical analysis we consider Bayesian implementation (although Dominant implementation is also possible in this two-type set up) with $ \theta^H = 1 $, $ \theta^L = 0.75 $ and two cases $ N=10,\, 90 $ and vary $ \epsilon $. Figures~\ref{fig:Ginivx},~\ref{fig:Ginivx90} depict the GC of the utility at optimal allocation, $ \{v(\hat{x}_{\epsilon,i}(\theta),\theta_i)\}_{i \in \mN} $ at $ N = 10 $ and $ N = 90 $, respectively. This is done for various type profiles, which due to symmetry can be defined by the number, $ m \in \{0,1,\ldots,N\} $,  of agents with type $ \theta^H $. Also included in the plots is the mean GC where $ m $ is chosen with Binomial($N$,0.5) distribution for $ N = 10 $ and Binomial($ N $,0.1) for $ N = 90 $. Note that in both cases there is a potential for large gains in fairness, as long as a large enough $ \epsilon $ exists, at which implementation is possible.
	
	In general there are two sources of upper bound on $ \epsilon $. One is through the $ \epsilon_{max} $ defined in Condition (A$ _\text{B} $) and other is through the well-defined-ness of the optimization problem~\eqref{EQexample}. For $ \epsilon < \frac{1}{2\theta^H} = 0.5 $, optimization~\eqref{EQexample} is a convex optimization problem and hence has a unique optimizer. One can verify however that the optimization continues to have a unique optimizer even beyond $ 0.5 $ and for all values of $ \epsilon $ within the range depicted in Fig.~\ref{fig:Ginivx},~\ref{fig:Ginivx90}. Therefore we only need to look at Condition (A$ _\text{B} $).  For $ N = 10 $, Condition (A$ _\text{B} $) gives $ \epsilon_{max} \approx 1.504 $ and for $ N = 90 $ we get $ \epsilon_{max} > 5 $. With this we limit our plots to $ \epsilon \le 1.5 $ for $ N = 10 $ and $ \epsilon \le 5 $  for $ N = 90 $.

	In the figures we see that the mean GC can be reduced by $ 42 $ and $ 80 $ percentage points for $ N=10 $ and $ N = 90 $, respectively.

	\begin{figure}
		%\centering
		\includegraphics[scale=0.2]{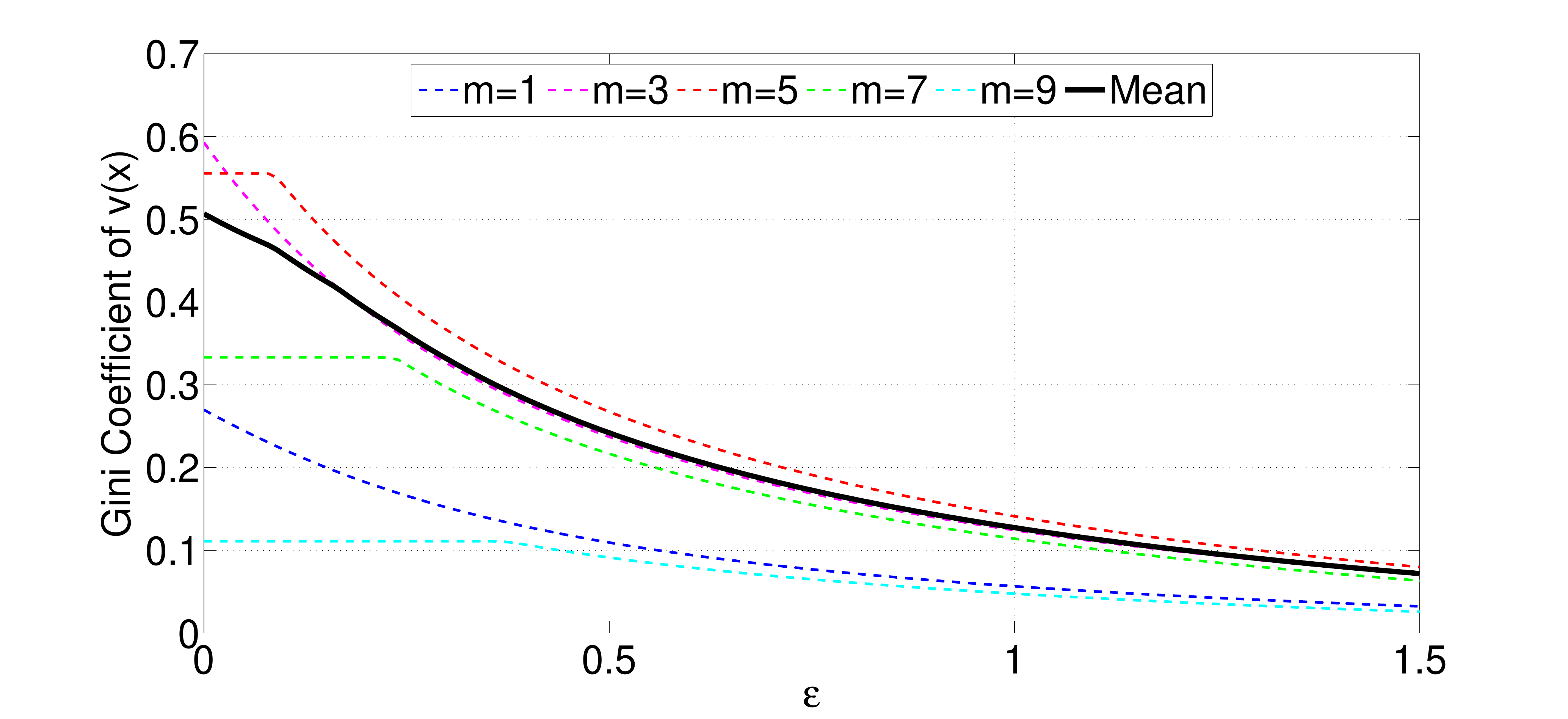}
		\caption{Gini Coefficient of $ \{v(\hat{x}_\epsilon(\theta);\theta_i)\}_{i \in \mN} $ vs $ \epsilon $, $ N = 10 $.}
		\label{fig:Ginivx}
	\end{figure}
	
	\begin{figure}
		%\centering
		\includegraphics[scale=0.2]{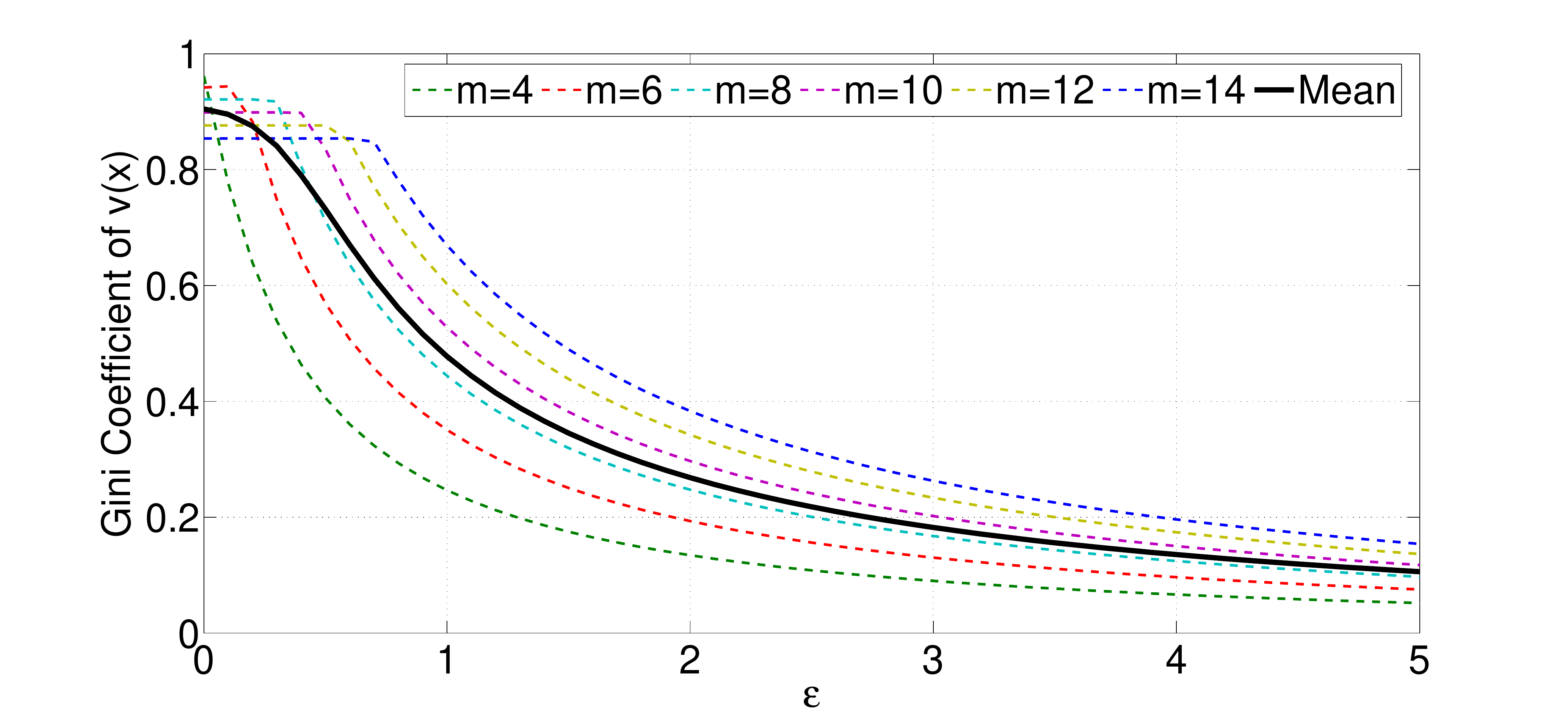}
		\caption{Gini Coefficient of $ \{v(\hat{x}_\epsilon(\theta);\theta_i)\}_{i \in \mN} $ vs $ \epsilon $, $ N = 90 $.}
		\label{fig:Ginivx90}
	\end{figure}

	Finally, Figures~\ref{fig:Taxdistort},~\ref{fig:Taxdistort90} depict the standard deviation in the tax vector $ t(\theta) $ for various type profiles as well as their mean. Here in each case the tax is chosen such that it minimizes the average variance within the feasible space of taxes, as dictated by the BSIC constraints. As $ \epsilon $ increases, for $ N = 10 $ the standard deviation in taxes paid can be driven from $ 0.134 $ to $ 0.0001 $ over the range of permissible $ \epsilon $. As mentioned in the Introduction, this means that tax fluctuation between various agents can be made lower, which can prevent a situation where taxes required to be paid from agents are not within their means. For $ N=90 $, standard deviation reduces by more than one order of magnitude, as it goes from $ 0.046 $ to $ 0.0035 $.

	\begin{figure}
		%\centering
		\includegraphics[scale=0.2]{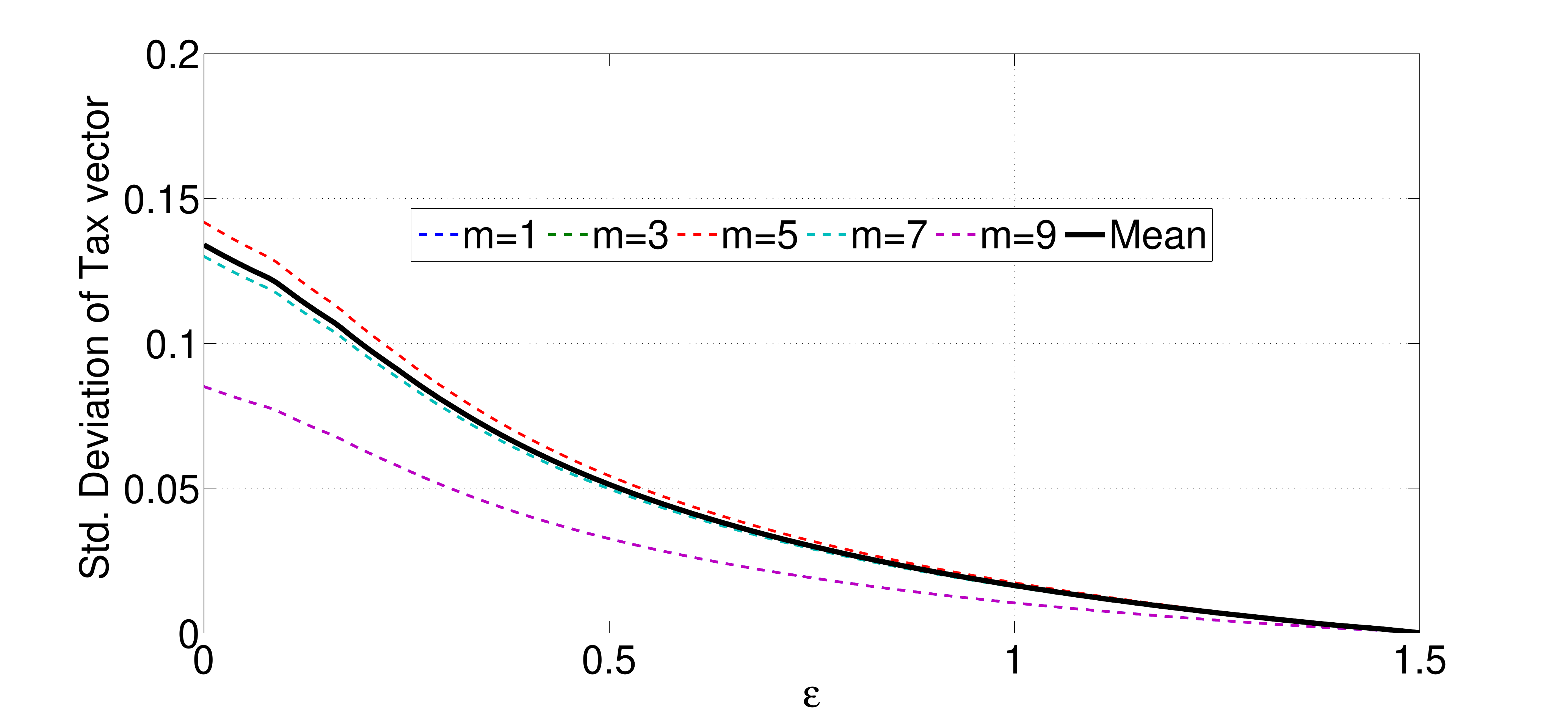}
		\caption{Standard Deviation of $ t(\theta) $ vs $ \epsilon $, $ N = 10 $.}
		\label{fig:Taxdistort}
	\end{figure}
	
	\begin{figure}
		%\centering
		\includegraphics[scale=0.2]{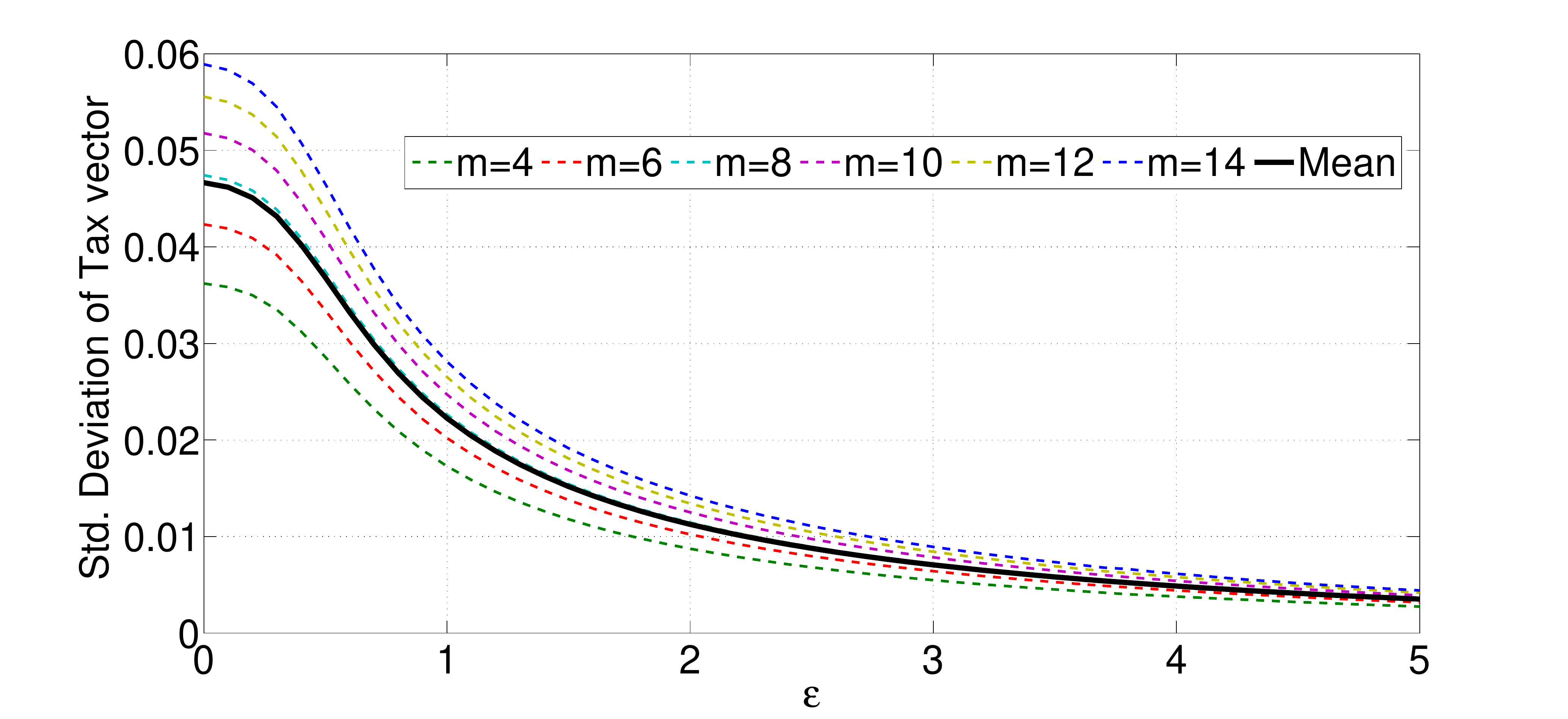}
		\caption{Standard Deviation of $ t(\theta) $ vs $ \epsilon $, $ N = 90 $.}
		\label{fig:Taxdistort90}
	\end{figure}

	\section{Discussion and Conclusions} \label{secdisc}

	This paper introduces a concept of fairness in resource allocation problems pertaining to SoU maximization. The fairness aspect is adjustable (through the selection of the parameter $ \epsilon $ and function $ f(\cdot) $) in a family of functions, thus giving a wide variety of criteria that a designer may choose at their own discretion. The main result in this paper is the proof of existence of mechanisms that implement the fairer allocation in Dominant and Bayesian equilibria (in respective cases). Numerical results indicate that through the proposed techniques there are significant gains in fairness of allocation, within the permissible limits of the design method.
	
	Although the form considered here is $ g_\epsilon(z) = z - \epsilon f(z)  $, it is easy to see that the results can be extended to cases where $ f $ depends on $ \epsilon $; as long as terms of the form $ \epsilon \cdot f $ go to $ 0 $ as $ \epsilon \rightarrow 0 $.
	Ideally one would like to consider the class of $ g_\epsilon(z) = z^{1-\epsilon} $, so as to reconcile with known fair social utilities such as the geometric mean and $ \min $ utility.
	This however may not be a practical necessity since as indicated by the results in Section~\ref{secexam}, even the form $ g_\epsilon(z) = z - \epsilon z^2 $ provides a significant reduction in GC, as well as the standard deviation of taxes. Also, as demonstrated in the Introduction, even theoretically the form $ g_\epsilon(z) = z - \epsilon f(z) $ can be considered as a close approximation to many other families (including the one above) using the Taylor's series.

	Finally a modification of our mechanism was presented, which guarantees truth-telling as the only BNE. This was done by adding one continuous message per agent, other than his/her type. This modification is especially useful in situations where selection of equilibria is too complex to predict.

	\bibliographystyle{IEEEtran}
	\bibliography{IEEEabrv,abhinav}

	\appendices

	\section{Proof of Theorem~\ref{thmDSIC0}} 
	\label{app1}
	
	\begin{proof}
		Note that the taxes are a finite collection of variables, since $ \Theta,\mN $ are both finite sets. For the DSIC constraints to be satisfied, the following constraints must hold $ \forall $ $ i \in \mN $, $ \forall $ $ \theta_{-i} \in \Theta_{-i} $,
		\begin{subequations}
			\begin{multline}
			v(\hat{x}_\epsilon(\theta_i^H,\theta_{-i});\theta_i^H) - t_i(\theta_i^H,\theta_{-i})
			\\
			\ge v(\hat{x}_\epsilon(\theta_i^L,\theta_{-i});\theta_i^H) - t_i(\theta_i^L,\theta_{-i}),
			\label{eqd1} 
			\end{multline} \begin{multline}
			v(\hat{x}_\epsilon(\theta_i^L,\theta_{-i});\theta_i^L) - t_i(\theta_i^L,\theta_{-i})
			\\
			\ge v(\hat{x}_\epsilon(\theta_i^H,\theta_{-i});\theta_i^L) - t_i(\theta_i^H,\theta_{-i}).
			\label{eqd2}
			\end{multline}
		\end{subequations}
		This gives truth-telling as a dominant strategy for agent $ i $ regardless of types of others. For IR, the following constraints must be satisfied $ \forall~i \in \mN $, $ \forall~\theta_{-i} \in \Theta_{-i} $,
		\begin{subequations}
			\begin{gather}
			t_i(\theta_i^H,\theta_{-i}) \le v(\hat{x}_\epsilon(\theta_i^H,\theta_{-i});\theta_i^H),
			\\
			t_i(\theta_i^L,\theta_{-i}) \le v(\hat{x}_\epsilon(\theta_i^L,\theta_{-i});\theta_i^L).
			\end{gather}
		\end{subequations}
		
		From the above sets of constraints, it is clear that one can design $ \big(t_i(\theta_i^H,\theta_{-i}),t_i(\theta_i^L,\theta_{-i})\big) $ separately for each $ i \in \mN $, $ \theta_{-i} \in \Theta_{-i} $. So for any $ i,\theta_{-i} $, the constraints can be rewritten in the form
		\begin{gather}
		\begin{bmatrix}
		1 & -1 \\
		-1 & 1 \\
		1 & 0 \\
		0 & 1
		\end{bmatrix}
		\begin{bmatrix}
		t_1 \\
		t_2
		\end{bmatrix}
		\le
		\begin{bmatrix}
		A_{HH} - A_{LH} \\
		A_{LL} - A_{HL} \\
		A_{HH} \\
		A_{LL}
		\end{bmatrix}
		\end{gather}
		where $ (t_1,t_2) = \big(t_i(\theta_i^H,\theta_{-i}),t_i(\theta_i^L,\theta_{-i})\big) $ and
		\begin{multline}
		A_{HH} = v(\hat{x}_\epsilon(\theta_i^H,\theta_{-i});\theta_i^H),
		~~
		A_{LL} = v(\hat{x}_\epsilon(\theta_i^L,\theta_{-i});\theta_i^L),
		\\
		A_{HL} = v(\hat{x}_\epsilon(\theta_i^H,\theta_{-i});\theta_i^L),
		~~
		A_{LH} = v(\hat{x}_\epsilon(\theta_i^L,\theta_{-i});\theta_i^H).
		\end{multline}
		
		Using the Farkas Lemma, the above system is feasible in $ t $ if and only if $ \forall~\lambda \in \bbR_+^4 $
		\begin{gather}
		\begin{bmatrix}
		1 & -1 \\
		-1 & 1 \\
		1 & 0 \\
		0 & 1
		\end{bmatrix}^\top
		\begin{bmatrix}
		\lambda_1 \\
		\lambda_2 \\
		\lambda_3 \\
		\lambda_4
		\end{bmatrix}
		= 0
		~~\Rightarrow~~
		\begin{bmatrix}
		A_{HH} - A_{LH} \\
		A_{LL} - A_{HL} \\
		A_{HH} \\
		A_{LL}
		\end{bmatrix}^\top
		\begin{bmatrix}
		\lambda_1 \\
		\lambda_2 \\
		\lambda_3 \\
		\lambda_4
		\end{bmatrix}
		\ge 0.
		\end{gather}
		
		The equality constraints on $ \lambda $ give that
		\begin{multline}
		\lambda_1 + \lambda_3 = \lambda_2,
		\quad 
		\lambda_2 + \lambda_4 = \lambda_1
		\\
		~\Rightarrow~
		\lambda_3 + \lambda_4 = 0
		~\Rightarrow~
		\lambda_3 = \lambda_4 = 0.
		\end{multline}
		So $ \lambda \in \bbR_+^4 $ can be parametrized as $ \lambda = (\xi,\xi,0,0)^\top $ for $ \xi \in \bbR_+ $. Thus for feasibility, using Farkas Lemma, the condition that must be satisfied is
		\begin{multline}
		\big(A_{HH}-A_{LH}, A_{LL} - A_{HL}, A_{HH}, A_{LL} \big) \cdot (\xi,\xi,0,0) \ge 0
		\\
		\Leftrightarrow \quad \xi \big(A_{HH} + A_{LL} - A_{HL} - A_{LH}\big) \ge 0
		\\
		\label{EQd10}
		\quad \Leftrightarrow \quad
		A_{HH} + A_{LL} - A_{HL} - A_{LH} \ge 0.
		\end{multline}
	\end{proof}

	\section{Proof of Theorem~\ref{thmDSIC}}
	\label{app2}
	
	\begin{proof}
		From the definition in~\eqref{EQkdef}, we have
		\begin{multline}
		v(\hat{x}_\epsilon(\theta_i,\theta_{-i});\theta_i) - \epsilon f\big(v(\hat{x}_\epsilon(\theta_i,\theta_{-i});\theta_i)\big)
		\\
		+ \sum_{j \ne i} v(\hat{x}_\epsilon(\theta_i,\theta_{-i});\theta_j) - \epsilon f\big(v(\hat{x}_\epsilon(\theta_i,\theta_{-i});\theta_j)\big)
		\\
		=
		v(\hat{x}_\epsilon(\phi_i,\theta_{-i});\theta_i) - \epsilon f\big(v(\hat{x}_\epsilon(\phi_i,\theta_{-i});\theta_i)\big)
		\\
		+ \sum_{j \ne i} v(\hat{x}_\epsilon(\phi_i,\theta_{-i});\theta_j) - \epsilon f\big(v(\hat{x}_\epsilon(\phi_i,\theta_{-i});\theta_j)\big) \\
		{} + K_i(\theta_i,\phi_i,\theta_{-i},\epsilon).
		\end{multline}
		
		Using the above twice, first with $ (\theta_i,\phi_i) = (\theta^H,\theta^L),\theta_{-i} = \theta_{-i} $ and then with $ (\theta_i,\phi_i) = (\theta^L,\theta^H),\theta_{-i} = \theta_{-i} $, and adding the two results in (using the notation from proof of Theorem~\ref{thmDSIC})
		\begin{multline} %\label{eqd3}
		A_{HH} - \epsilon f(A_{HH}) + A_{LL} - \epsilon f(A_{LL}) \\
		= K_i(\theta^H,\theta^L,\theta_{-i},\epsilon) + K_i(\theta^L,\theta^H,\theta_{-i},\epsilon) 
		\\
		{}+{} A_{HL} - \epsilon f(A_{HL}) + A_{LH} - \epsilon f(A_{LH}).
		\end{multline}
		This can be rewritten as 
		\begin{multline} \label{EQd7}
		\Leftrightarrow ~~ A_{HH} + A_{LL} - A_{HL} - A_{LH}
		\\
		=  K_i(\theta^H,\theta^L,\theta_{-i},\epsilon) + K_i(\theta^L,\theta^H,\theta_{-i},\epsilon) 
		\\
		{} + {} \epsilon \big( f(A_{HH}) + f(A_{LL}) - f(A_{HL}) - f(A_{LH}) \big)
		\end{multline}
		
		Thus it is sufficient to prove that the RHS above is non-negative. Owing to Condition (A$_{\text{D}}$), $ \exists $ $ \epsilon_{max} > 0 $ such that the sum of the first two terms in RHS is strictly positive for all $ 0 \le \epsilon < \epsilon_{max} $ and clearly the second term can be made arbitrarily small in magnitude (by choosing a smaller $ \epsilon_{max} $). Hence the condition in~\eqref{EQthm1cond} is satisfied\footnote{Overall the behaviour of $ K_i(\theta_i,\phi_i,\theta_{-i},\epsilon) $ w.r.t. $ \epsilon $ will dictate the value of $ \epsilon_{max},\tilde{\epsilon}_{max} $. This in turn will effect the usefulness of this method, since a designer might want to ensure certain minimum gains in fairness for which he/she might want to choose $ \epsilon $ as large as possible.} for all $ 0 \le \epsilon < \tilde{\epsilon}_{max} $. Here $ \tilde{\epsilon}_{max} $ is bigger or smaller than $ \epsilon_{max} $	depending on whether $ f(A_{HH}) + f(A_{LL}) - f(A_{HL}) - f(A_{LH}) $ is positive or negative in the range $ (0,\epsilon_{max}) $.
	\end{proof}

	\section{Proof of Theorem~\ref{thmBSIC}} 
	\label{app3}
	
	\begin{proof} 
		The utility for any agent $ i $ when other agents are truth-telling is
		\begin{multline}
		\tilde{u}_i(\phi_i \mid \theta_i) = \!\!\! \sum_{\theta_{-i} \in \Theta_{-i}} \!\!\! p_i(\theta_{-i} \mid \theta_i) \cdot 
		\\
		\Big[ v(\hat{x}_\epsilon(\phi_i,\theta_{-i});\theta_i) - t_i(\phi_i,\theta_{-i}) \Big]
		\end{multline}
		where agent $ i $'s true and quoted types are $ \theta_i,\phi_i \in \Theta_i $, respectively. Here we consider taxes in the d'AGV form
		\begin{equation}
		t_i(\psi) = z_i(\psi) - \frac{1}{N-1} \sum_{\substack{j \in \mN \\ j \ne i}} z_j(\psi) \quad \forall~i \in \mN,~\psi \in \Theta.
		\end{equation}
		Taxes in this form always satisfy SBB and any tax function which satisfies SBB can be written in this form. Therefore WLOG, the design variables from here onwards will be $ \{z_j(\psi)\}_{\substack{ j \in \mN \\ \psi \in \Theta }} $.
		
		BSIC constraints can be written as: $ \forall~ i \in \mN,~(\theta_i,\phi_i) \in \mF_i $,
		\begin{subequations}
			\label{EQbsic}
			\begin{gather}
			\label{EQbsicA} 
			\tilde{u}_i(\theta_i \mid \theta_i) \ge \tilde{u}_i(\phi_i \mid \theta_i).
			\\
			\nonumber 
			\Leftrightarrow
			\sum_{\theta_{-i} \in \Theta_{-i}} \!\!\!  p_i(\theta_{-i} \mid \theta_i) \Big[z_i(\theta_i,\theta_{-i}) - \frac{1}{N-1} \sum_{j \ne i} z_j(\theta_i,\theta_{-i})
			\\
			\nonumber 
			-  z_i(\phi_i,\theta_{-i}) + \frac{1}{N-1} \sum_{j \ne i} z_j(\phi_i,\theta_{-i}) \Big]
			\\
			\label{EQby5}
			\le
			\sum_{\theta_{-i} \in \Theta_{-i}}   p_i(\theta_{-i} \mid \theta_i) \big[ v(\hat{x}_\epsilon(\theta_i,\theta_{-i});\theta_i) - v(\hat{x}_\epsilon(\phi_i,\theta_{-i});\theta_i) \big].
			\end{gather}
		\end{subequations}
		
		So the condition for design of taxes is a linear system of inequalities and can be written in the form $ Az \le b $ where the indexing is as follows
		\begin{multline}
		z = \Big( {z}_j(\psi) \Big)  \in \bbR^{D_1},~
		b = \Big( b(i,\theta_i,\phi_i) \Big) \in \bbR^{D_2},
		\\
		A = \Big( A(i,\theta_i,\phi_i \mid j,\psi) \Big) \in \bbR^{D_2 \times D_1}.
		\end{multline}
		for $ D_1 = N \cdot \prod_{j=1}^N L_j $ and $ D_2 = \prod_{i=1}^N (L_i^2-L_i) $.
		The rest of the proof will be to show that this linear system is feasible in variable $ z $, using the \emph{Farkas Lemma}\footnote{ %
			Relevant version of the Farkas alternative result: The system
			\begin{equation}
			x \in \bbR^N, \quad Ax \le b; \quad A \in \bbR^{M\times N}, b \in \bbR^{M}
			\end{equation}
			is feasible {iff} all solutions $ \lambda \in \bbR_+^{M} $ of $ A^\top \lambda = 0 $ satisfy $ b^\top \lambda \ge 0 $.}~\cite[pg.201]{rockafellar1970convex}.

		Consider any $ \lambda = \big(\lambda_i(\theta_i,\phi_i)\big) \in \Lambda \triangleq \{ \lambda_k(\theta_k,\phi_k) \in \bbR_+ \mid k \in \mN, (\theta_k,\phi_k) \in \mF_k \} $ that satisfies $ A^\top \lambda = 0 $, i.e. $ \forall~j \in \mN,~\psi \in \Theta $,
		\begin{multline}
		\sum_{i \in \mN} \sum_{(\theta_i,\phi_i) \in \mF_i} A(i,\theta_i,\phi_i \mid j,\psi) \, \lambda_i(\theta_i,\phi_i) = 0
		\\
		\Leftrightarrow ~~ p_j(\psi_{-j} \mid \psi_j) \sum_{\phi_j \ne \psi_j}  \lambda_j(\psi_j,\phi_j) 
		\\
		{} - \sum_{\theta_j \ne \psi_j} p_j(\psi_{-j} \mid \theta_j) \, \lambda_j(\theta_j,\psi_j)
		\\
		{} - \frac{1}{N-1} \sum_{k \ne j} \Big[ p_k(\psi_{-k} \mid \psi_k) \sum_{\phi_k \ne \psi_k} \lambda_k(\psi_k,\phi_k) \\
		\label{EQlcond}
		{} - \sum_{\theta_k \ne \psi_k} p_k(\psi_{-k} \mid \theta_k) \, \lambda_k(\theta_k,\psi_k) \Big] = 0.
		\end{multline}
		Above equation can be rearranged to give that $ \forall~j \in \mN,~\psi \in \Theta $,
		\begin{multline}
		p_j(\psi_{-j} \mid \psi_j) \sum_{\phi_j \ne \psi_j} \lambda_j(\psi_j,\phi_j) 
		\\
		{} - \sum_{\theta_j \ne \psi_j} p_j(\psi_{-j} \mid \theta_j) \, \lambda_j(\theta_j,\psi_j)
		\\
		= \frac{1}{N} \sum_{k \in \mN} \Big[ p_k(\psi_{-k} \mid \psi_k) \sum_{\phi_k \ne \psi_k} \lambda_k(\psi_k,\phi_k) \\
		\label{EQcondlambda}
		- \sum_{\theta_k \ne \psi_k} p_k(\psi_{-k} \mid \theta_k) \, \lambda_k(\theta_k,\psi_k) \Big].
		\end{multline}
		Denote the RHS above by $ R(\psi) $ - note that it depends only on $ \psi $ and not $ j $.
		
		\begin{lemma}  \label{lemID}
			If $ \lambda \in \Lambda $ satisfies~\eqref{EQcondlambda} then for any $ j \in \mN,~\psi_j \in \Theta_j $,
			\begin{gather}
			\sum_{\substack{\theta_j \in \Theta_j \\ \theta_j \ne \psi_j}} \lambda_j(\theta_j,\psi_j) = \sum_{\substack{ \phi_j \in \Theta_j \\ \phi_j \ne \psi_j }} \lambda_j(\psi_j,\phi_j)
			\end{gather}
		\end{lemma}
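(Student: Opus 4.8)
The plan is to recognise both sides of the claimed identity as the single-coordinate marginal of the vector $R$, and then to show this marginal vanishes. For fixed $j$ and $\psi$ write $\Phi_j(\psi)$ for the left-hand side of~\eqref{EQcondlambda}, so that~\eqref{EQcondlambda} reads $\Phi_j(\psi)=R(\psi)$, with $R(\psi)$ independent of $j$. Summing $\Phi_j$ over $\psi_{-j}\in\Theta_{-j}$ with $\psi_j$ fixed, and using that $\sum_{\psi_{-j}}p_j(\psi_{-j}\mid\theta_j)=1$ for every fixed $\theta_j$, both conditional-probability factors collapse to $1$ and one obtains
\[
\sum_{\psi_{-j}}\Phi_j(\psi_j,\psi_{-j})
=\sum_{\phi_j\ne\psi_j}\lambda_j(\psi_j,\phi_j)-\sum_{\theta_j\ne\psi_j}\lambda_j(\theta_j,\psi_j).
\]
Thus the Lemma is equivalent to the assertion $\sum_{\psi_{-j}}R(\psi_j,\psi_{-j})=0$ for every $j$ and every $\psi_j$.

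The crux is to show that $R$ has vanishing marginal along each agent's own type coordinate, i.e.\ that for every $k\in\mN$ and every fixed $\psi_{-k}$,
\[
\sum_{\psi_k\in\Theta_k}R(\psi_k,\psi_{-k})=0.
\]
Here I would replace $R$ by $\Phi_k$ (legitimate since~\eqref{EQcondlambda} holds for $j=k$) and sum $\Phi_k$ over $\psi_k$. The first term of $\Phi_k$ contributes $\sum_{\psi_k}\sum_{\phi_k\ne\psi_k}p_k(\psi_{-k}\mid\psi_k)\,\lambda_k(\psi_k,\phi_k)$ and the second contributes $\sum_{\psi_k}\sum_{\theta_k\ne\psi_k}p_k(\psi_{-k}\mid\theta_k)\,\lambda_k(\theta_k,\psi_k)$; both are the same sum of $p_k(\psi_{-k}\mid a)\,\lambda_k(a,b)$ over all ordered pairs $(a,b)\in\mF_k$, with only the names of the two summation indices interchanged, so they cancel.

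With the single-coordinate marginals in hand, the required multi-coordinate marginal follows at once. Fixing $j$ and choosing any $k\ne j$ (which exists because $N\ge 2$), I would split $\psi_{-j}=(\psi_k,\eta)$, where $\eta$ collects the types of all agents other than $j$ and $k$, and sum the inner coordinate first:
\[
\sum_{\psi_{-j}}R(\psi_j,\psi_{-j})
=\sum_{\eta}\Big(\sum_{\psi_k\in\Theta_k}R(\psi_j,\psi_k,\eta)\Big)
=\sum_{\eta}0=0 .
\]
Combined with the reduction of the first paragraph, this yields $\sum_{\phi_j\ne\psi_j}\lambda_j(\psi_j,\phi_j)=\sum_{\theta_j\ne\psi_j}\lambda_j(\theta_j,\psi_j)$, which is exactly the claim.

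I expect the only genuinely nonobvious point to be the index-relabelling cancellation that produces the zero single-coordinate marginal; the rest is bookkeeping with the conditional distributions. Two facts must be flagged explicitly: the collapse in the first and second paragraphs uses that each $p_k(\cdot\mid\theta_k)$ sums to one over $\Theta_{-k}$, and the final splitting uses $N\ge2$ so that a coordinate $k\ne j$ is available to be summed over first.
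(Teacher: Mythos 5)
Your proof is correct and takes essentially the same route as the paper's: both use normalization of the conditional priors to reduce the claim to showing $\sum_{\psi_{-j}} R(\psi_j,\psi_{-j})=0$, and both establish this by invoking~\eqref{EQcondlambda} for a second agent $k\ne j$ (hence $N\ge 2$) and cancelling the resulting sum through relabelling the two summation indices of $\lambda_k$. Your intermediate claim that the single-coordinate marginal of $R$ vanishes is simply a cleaner, more modular packaging of the chain of equalities in the paper's appendix.
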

		\begin{proof}
			Please see Appendix~\ref{app4}.
		\end{proof}

		With the application of above Lemma, one can rewrite LHS of~\eqref{EQcondlambda} to get that $ \forall~j,\psi $,
		\begin{multline}
		p_j(\psi_{-j} \mid \psi_j) \!\sum_{\phi_j \ne \psi_j}\! \lambda_j(\phi_j,\psi_j) - \sum_{\theta_j \ne \psi_j}\! p_j(\psi_{-j} \mid \theta_j) \, \lambda_j(\theta_j,\psi_j)  \\
		= R(\psi)
		\end{multline}
		
		Condition (B) on priors states that there exist no $ \lambda \in \Lambda $ such that above holds for a non-zero $ R $. Hence $ A^\top \lambda = 0 $ implies that $ R \equiv 0 $, therefore (by~\eqref{EQcondlambda}) $ \forall~j \in \mN $, $ \forall~\psi \in \Theta $,
		\begin{multline}
		\label{EQlmain}
		p_j(\psi_{-j} \mid \psi_j)  \sum_{\phi_j \ne \psi_j}  \lambda_j(\psi_j,\phi_j) 
		\\
		=  \sum_{\theta_j \ne \psi_j}  p_j(\psi_{-j} \mid \theta_j) \, \lambda_j(\theta_j,\psi_j).
		\end{multline}
		
		Next we show that for all $ \lambda \in \Lambda $ that satisfy~\eqref{EQlmain} we have $ b^\top \lambda \ge 0  $ i.e.
		\begin{subequations}
			\begin{gather}
			\label{EQfarkas}
			\sum_{i \in \mN} \sum_{(\theta_i,\phi_i) \in \mF_i} b(i,\theta_i,\psi_i) \, \lambda_i(\theta_i,\phi_i) \ge 0
			\\
			\nonumber 
			\Leftrightarrow \sum_{\substack{i \in \mN, \\ (\theta_i,\phi_i) \in \mF_i}} \!\!\! \lambda_i(\theta_i,\phi_i) \!\! \sum_{\theta_{-i} \in \Theta_{-i}} \!\! p_i(\theta_{-i} \mid \theta_i) \cdot 
			\\
			\Big[ v(\hat{x}_\epsilon(\theta_i,\theta_{-i});\theta_i) - v(\hat{x}_\epsilon(\phi_i,\theta_{-i});\theta_i) \Big] \ge 0.
			\label{EQcond}
			\end{gather}
		\end{subequations}
		Proving this will finish the proof by Farkas Lemma.
		
		For any $ i,\theta_i,\phi_i,\theta_{-i} $, denote $ x = \hat{x}_\epsilon(\theta_i,\theta_{-i}),~\tilde{x} = \hat{x}_\epsilon(\phi_i,\theta_{-i}) $. Rearranging terms from~\eqref{EQkdef}, gives
		\begin{multline}
		v(x;\theta_i) - v(\tilde{x};\theta_i) =
		\\
		\label{EQrhsproof}
		\sum_{\substack{j \in \mN \\ j \ne i}} \Big(v(\tilde{x};\theta_j) - v(x;\theta_j)\Big) + \epsilon \sum_{j \in \mN} \Big(f(v(x;\theta_i)) - f(v(\tilde{x};\theta_i))\Big)
		\\
		{}+{} K_i(\theta_i,\phi_i,\theta_{-i},\epsilon).
		\end{multline}
		
		Denote the RHS expression in~\eqref{EQrhsproof} as $ \eta = \eta_1 + \eta_2 $, where
		\begin{multline}
		\eta_1 = \sum_{\substack{j \in \mN \\ j \ne i}} \Big(v(\tilde{x};\theta_j) - v(x;\theta_j)\Big),
		\\
		\eta_2 = \epsilon \sum_{j \in \mN} \Big(f(v(x;\theta_j)) - f(v(\tilde{x};\theta_j))\Big) + K_i(\theta_i,\phi_i,\theta_{-i},\epsilon).
		\end{multline}
		Now continuing from LHS of~\eqref{EQcond},
		\begin{gather} \label{EQby2}
		\text{LHS of~\eqref{EQcond}} = \!\!\!\sum_{\substack{i\in \mN \\ (\theta_i,\phi_i) \in \mF_i }} \!\!\! \lambda_i(\theta_i,\phi_i) \!\! \sum_{\theta_{-i} \in \Theta_{-i}} \!\! p_i(\theta_{-i} \mid \theta_i) (\eta_1 + \eta_2).
		\end{gather}
		For any fixed $ i $, consider the summation with only $ \eta_1 $ first
		\begin{multline}
		\label{EQby6}
		\sum_{(\theta_i,\phi_i) \in \mF_i} \lambda_i(\theta_i,\phi_i) \sum_{\theta_{-i} \in \Theta_{-i}} p_i(\theta_{-i} \mid \theta_i) \cdot
		\\
		\sum_{\substack{j \in \mN \\ j \ne i}} \Big(v(\hat{x}_\epsilon(\phi_i,\theta_{-i});\theta_j) - v(\hat{x}_\epsilon(\theta_i,\theta_{-i});\theta_j)\Big)
		\\
		= \sum_{\substack{j \in \mN \\ j \ne i}} \sum_{\substack{\phi_i \in \Theta_i, \\ \theta_{-i} \in \Theta_{-i} }} \!\!\! v(\hat{x}_\epsilon(\phi_i,\theta_{-i});\theta_j) \sum_{\theta_i \ne \phi_i} p_i(\theta_{-i} \mid \theta_i) \lambda_i(\theta_i,\phi_i)
		\\
		- \sum_{\substack{j \in \mN \\ j \ne i}} \sum_{\substack{ (\theta_i,\phi_i) \in \mF_i \\ \theta_{-i} \in \Theta_{-i}}} \lambda_i(\theta_i,\phi_i) \: p_i(\theta_{-i} \mid \theta_i) \: v(\hat{x}_\epsilon(\theta_i,\theta_{-i});\theta_j).
		\end{multline}
		By~\eqref{EQlmain}, the inside summation in the first term is equal to $ p_i(\theta_{-i} \mid \phi_i) \sum_{\psi_i} \lambda(\phi_i,\psi_i) $. Incorporating this and changing variables of summation appropriately gives the overall summation from~\eqref{EQby6} equal to $ 0 $.
		Now consider the term in RHS of~\eqref{EQby2} with $ \eta_2 $ 
		\begin{gather}
		\label{EQby7}
		\sum_{\substack{i \in \mN \\ (\theta_i,\phi_i) \in \mF_i}}  \lambda_i(\theta_i,\phi_i) \sum_{\theta_{-i} \in \Theta_{-i}} p_i(\theta_{-i} \mid \theta_i) \cdot \eta_2.
		\end{gather}
		Rearranging terms in $ \eta_2 $, we can write
		\begin{equation}
		\eta_2 = \sum_{j \in \mN} \Big(v(\hat{x}_\epsilon(\theta_i,\theta_{-i});\theta_j) - v(\hat{x}_\epsilon(\phi_i,\theta_{-i});\theta_j)\Big). 
		\end{equation}
		Therefore by Condition (A$_{\text{B}}$), $ \exists $ $ \epsilon_{max} > 0 $ such that for all $ 0 \le \epsilon < \epsilon_{max} $, the inside summation in~\eqref{EQby7} is non-negative $ \forall $ $ i \in \mN $, $ (\theta_i,\phi_i) \in \mF_i $. This finishes the proof by Farkas Lemma, since the expression in~\eqref{EQfarkas} is now shown to be non-negative for all positive $ 0 \le \epsilon < \epsilon_{max} $.
	\end{proof}

	\section{Proof of Lemma~\ref{lemID}}
	\label{app4}
	
	\begin{proof}
		\begin{subequations}
			\begin{align}
			\sum_{\substack{\theta_j \in \Theta_j \\ \theta_j \ne \psi_j}} \lambda_j(\theta_j,\psi_j) - \sum_{\substack{ \phi_j \in \Theta_j \\ \phi_j \ne \psi_j }} \lambda_j(\psi_j,\phi_j)
			\\
			\nonumber
			= \sum_{\psi_{-j}} \Big[ p_j(\psi_{-j} \mid \psi_j) \sum_{\theta_j \ne \psi_j} \lambda_j(\theta_j,\psi_j)
			\\
			- \sum_{\phi_j \ne \psi_j } p_j(\psi_{-j} \mid \phi_j) \, \lambda_j(\psi_j,\phi_j) \Big]
			\\
			\label{EQproof1}
			= \sum_{\psi_{-j}} R(\psi_j,\psi_{-j})
			\\
			\label{EQproof2}
			= \sum_{\psi_{-j}} \Big[ p_k(\psi_{-k} \mid \psi_k) \sum_{\theta_k \ne \psi_k} \lambda_k(\theta_k,\psi_k)
			\\
			\nonumber
			- \sum_{\phi_k \ne \psi_k} p_k(\psi_{-k} \mid \phi_k) \, \lambda_k(\psi_k,\phi_k) \Big],~~k \ne j
			\\
			= 0.
			\end{align}
		\end{subequations}
		Here~\eqref{EQproof1},~\eqref{EQproof2} follows by application of~\eqref{EQcondlambda} and other equations are just by rearranging summation terms.
	\end{proof}

	\section{Proof of  Part (1), Theorem~\ref{thmfbi}}
	\label{app5}

	\begin{proof}
		For strategy
		\begin{gather}
		(\sigma,\rho) = \big\{ \left( \sigma_i:\Theta_i \rightarrow \Theta_i \right),~\left( \rho_i:\Theta_i \rightarrow [-\delta,+\delta] \right) \big\}_{i \in \mN}, 
		\end{gather}
		utility is
		\begin{multline} \label{EQunew}
		u_i(\sigma,\rho \mid \theta_i) = \sum_{\theta_{-i}} p_i(\theta_{-i} \mid \theta_i) \Big[ v\big(\hat{x}_i(\sigma(\theta))+\rho_i(\theta_i);\theta_i\big) 
		\\
		{} - {} \hat{t}_i\big(\sigma(\theta)\big) - \rho_i(\theta_i) v^\prime\big(\hat{x}_{\epsilon,i}(\sigma(\theta));\sigma_i(\theta_i)\big) \Big]
		\end{multline}
		
		First in this proof we establish that for any agent $ i $, true type $ \theta_i $, when other agents use strategy $ m_{-i}^\star = (\theta_{-i},\underline{0}) $, strategy $ \sigma_i(\theta_i) = \theta_i $ gives higher utility than $ \sigma_i(\theta_i) \ne \theta_i $ at any value of $ \rho_i(\theta_i) $. For this start by considering the utility in~\eqref{EQunew} with $ \sigma_{-i}(\theta_{-i}) = \theta_{-i} $ and $ \rho_i(\theta_i) = 0 $, 
		\begin{multline} \label{EQunew2}
		\tilde{u}_i(\sigma_i(\theta_i) \mid \theta_i) = \sum_{\theta_{-i}} p_i(\theta_{-i} \mid \theta_i) \Big[ v\big(\hat{x}_{\epsilon,i}(\sigma_i(\theta_i),\theta_{-i});\theta_i\big) 
		\\
		{} - {} \hat{t}_i\big(\sigma_i(\theta_i),\theta_{-i}\big)  \Big].
		\end{multline}
		From the result in Section~\ref{secbne}, allocation and taxes $ (\hat{x}_{\epsilon},\hat{t}) $ satisfy BSIC condition. Hence the utility in~\eqref{EQunew2} is maximized at $ \sigma_i(\theta_i) = \theta_i $. The minimum difference between the value of such an expression at $ \sigma_i(\theta_i) = \theta_i $ and $ \sigma_i(\theta_i) \ne \theta_i $ is strictly positive since the tax function $ \hat{t} $ satisfies BSIC with strict inequality. 
		Now the utility $ u_i $ in~\eqref{EQunew} clearly depends continuously on $ \rho_i(\theta_i) $, hence by choosing a small enough range for $ \rho_i(\theta_i) $ (i.e. by choosing a small enough $ \delta $) it can be ensured that even in~\eqref{EQunew}, for $ \sigma_{-i}(\theta_{-i}) = \theta_{-i} $, $ u_i $ is maximzed at $ \sigma_i(\theta_i) = \theta_i $.
		
		Now the only thing that remains to be proven is that when other agents quote message $ m_{-i}^\star $ and $ \sigma_i(\theta_i) = \theta_i $, then $ \rho_i(\theta_i) = 0 $ is optimal. 
		For agent $ i $ and true type $ \theta_i $, the utility in~\eqref{EQunew} only depends on $ \rho_i $ through $ \rho_i(\theta_i) $. Optimizing w.r.t. $ \rho_i(\theta_i) $ (for calculating equilibrium) gives
		\begin{multline} \label{EQfbider}
		\frac{\partial u_i}{\partial \rho_i(\theta_i)} = \sum_{\theta_{-i} \in \Theta_{-i}} p_i(\theta_{-i} \mid \theta_i) \Big[ v^\prime(\hat{x}_{\epsilon,i}(\theta)+\rho_i(\theta_i);\theta_i) 
		\\
		{} - {} v^\prime(\hat{x}_{\epsilon,i}(\theta);\sigma_i(\theta_i)) \Big] = 0.
		\end{multline}
		At equilibrium, the above derivative will have to be zero since the end points $ \pm \delta $ cannot be maximizers - since they incur a huge tax $ B $. Also note that since $ v(x_i;\theta_i) $ is assumed to be concave in $ x_i $ (for any $ \theta_i $) the utility $ u_i(\sigma,\rho \vert \theta_i) $ in~\eqref{EQunew} is concave in $ \rho_i(\theta_i) $ and thus the condition in~\eqref{EQfbider} is both necessary and sufficient for optimality w.r.t. $ \rho_i(\theta_i) $.
		
		The expression in~\eqref{EQfbider} is clearly equal to $ 0 $ when $ \sigma_i(\theta_i) = \theta_i $ and $ \rho_i(\theta_i) = 0 $. Hence applying the above result for all $ \theta_i \in \Theta_i $ and $ i \in \mN $ gives that $ m^\star = (\theta,\underline{0}) $ is a BNE.
	\end{proof}

	\section{Proof of Part (2), Theorem~\ref{thmfbi}}
	\label{app6}
	
	\begin{proof} 
		Consider the derivative similar to~\eqref{EQfbider} but with general strategies 
		\begin{multline} \label{EQfbiproof2}
		\frac{\partial u_i}{\partial \rho_i(\theta_i)} =  \sum_{\theta_{-i} \in \Theta_{-i}} p_i(\theta_{-i} \mid \theta_i) \cdot 
		\\
		\Big[ v^\prime(\hat{x}_{\epsilon,i}(\sigma_i(\theta_i),\sigma_{-i}(\theta_{-i})) + \rho_i(\theta_i);\theta_i) 
		\\
		{} - {} v^\prime(\hat{x}_{\epsilon,i}(\sigma_i(\theta_i),\sigma_{-i}(\theta_{-i}));\sigma_i(\theta_i)) \Big].
		\end{multline}
		This being zero is a necessary at BNE. Now we begin by considering the same with $ \rho_i(\theta_i) = 0 $,
		\begin{multline} \label{EQfbiproof2,1}
		\Psi = \sum_{\theta_{-i} \in \Theta_{-i}} p_i(\theta_{-i} \mid \theta_i)
		\Big[ v^\prime(\hat{x}_{\epsilon,i}(\sigma_i(\theta_i),\sigma_{-i}(\theta_{-i}));\theta_i) 
		\\
		{} - {} v^\prime(\hat{x}_{\epsilon,i}(\sigma_i(\theta_i),\sigma_{-i}(\theta_{-i}));\sigma_i(\theta_i)) \Big].
		\end{multline}
		By the assumption in the theorem statement, for $ \sigma_i(\theta_i) \ne \theta_i $, the term inside the brackets above is either positive for all $ \theta_{-i} $ or negative for all $ \theta_{-i} $. Hence the expression $ \Psi $ cannot be zero for any $ i,\theta_i $ when $ \sigma_i(\theta_i) \ne \theta_i $. Since only discrete variables are involved in the expression $ \Psi $, it means that all the values that $ \Psi $ can take (i.e. $ \forall~i,\theta_i,\sigma $ with $ \sigma_i(\theta_i) \ne \theta_i $) are a discrete set (say, $ D $) of points not containing zero. 
		
		The only difference between the expression in~\eqref{EQfbiproof2} and $ \Psi $ is the introduction of the variable $ \rho_i(\theta_i) \in [-\delta,+\delta] $ - on which the expression in~\eqref{EQfbiproof2} depends continuously. This means that the set of all values (say, $ E $) taken by $ \frac{\partial u_i}{\partial \rho_i(\theta_i)} $ is the union of continuous sets centered at the points from set $ D $. Here the range of the continuous sets around each point in $ D $ can be continuously controlled by changing $ \delta $ with the sets becoming a discrete point as $ \delta \rightarrow 0 $. Hence it is clear that one can find $ \delta $ small enough so that even this new set $ E $ doesn't contain zero. Once such a $ \delta $ is chosen it is clear that any strategy $ \sigma $ for which $ \exists~i,\theta_i $ such that $ \sigma_i(\theta_i) \ne \theta_i $ cannot be a BNE since the expression in~\eqref{EQfbiproof2} being zero was a necessary condition for a BNE.
		
		Note that each part of the proof of Theorem~\ref{thmfbi} prescribes a positive upper limit for $ \delta $. The designer will eventually choose a $ \delta $ that satisfies both.
		
		Now consider a strategy of the form $ m_i^\star = (\theta_i,y_i) $ where there is truth-telling but $ y_i \ne 0 $. It is clear that this cannot be a BNE, since by strict concavity of $ v(x_i;\theta_i) $, at $ \sigma_i(\theta_i) = \theta_i $ the expression in~\eqref{EQfbiproof2} goes to zero only for $ \rho_i(\theta_i) = y_i = 0 $.
	\end{proof} 
	
	%\end{spacing}
\end{document}